\tikzset{node distance=2cm, auto}
\theoremstyle{plain}
\newtheorem{thm}{Theorem}[section]
\newtheorem{lem}[thm]{Lemma}
\newtheorem{prop}[thm]{Proposition}
\newtheorem{cor}[thm]{Corollary}
\theoremstyle{definition}
\newtheorem{defn}[thm]{Definition}
\theoremstyle{remark}
\newtheorem*{dem}{Proof}
\theoremstyle{plain}
\def\S#1{\mathbb{S}^{#1}}
\def\Esp#1#2{\text{\rm E}_{#1}\left[#2\right]}
\def\E#1{\mathcal{E}_{#1}}
\def\R{\mathbb{R}}
\def\i{\mathbf{i}}
\begin{document}

\title[The Diamond ensemble]{The Diamond ensemble: a constructive set of points with small logarithmic energy}

\author{Carlos Beltr\'an and Uju\'e Etayo}

\date{\today{}}

\thanks{The authors would like to thank Edward Saff and Peter Grabner for many enriching discussions on this topic.
This research has been partially supported by Ministerio de Economía y Competitividad, Gobierno de España, through grants MTM2017-83816-P,  MTM2017-90682-REDT, and by the Banco de Santander and Universidad de Cantabria grant 21.SI01.64658.}

\subjclass[2010]{31C12, 31C20, 41A60, 52C15, 52C35, 52A40.}

\keywords{Minimal logarithmic energy, constructive spherical points.}

\address{Departamento de Matem\'aticas, Estad\'{\i}stica y Computaci\'on, 
Universidad de Cantabria,  Fac. Ciencias, Avd. Los Castros s/n, 39005 Santander, Spain}
\email{etayomu@unican.es}

\address{Departamento de Matem\'aticas, Estad\'{\i}stica y Computaci\'on, 
Universidad de Cantabria,  Fac. Ciencias, Avd. Los Castros s/n, 39005 Santander, Spain}
\email{beltranc@unican.es}

\begin{abstract}

We define a family of random sets of points, the Diamond ensemble, on the sphere $\S{2}$ depending on several parameters. Its most important property is that, for some of these parameters, the asymptotic expected value of the logarithmic energy of the points can be computed rigorously and shown to attain very small values, quite close to the conjectured minimal value.

\end{abstract}

\maketitle

\tableofcontents


\section{Introduction and main results}

Sets of points on the sphere $\mathbb{S}^{2}$ that are well-distributed in some sense conform an interesting object of study, see for example \cite{Brauchart2015293} for an interesting survey with different approaches to well-distributed points. One usually seeks for points with small {\em cap discrepancy} or maximal {\em separation distance} or, as we do in this paper, minimal {\em potential energy}.

\subsection{Riesz and logarithmic energy}
Given $s\in(0,\infty)$, the Riesz potential or $s$--energy of a set on points $\omega_{N} = \{ x_{1}, \ldots  ,x_{N} \}$ on the sphere $\mathbb{S}^{2}$ is
\begin{equation}\label{eq:Riesz}
\E{s}(\omega_{N}) = \displaystyle\sum_{i \neq j} \frac{1}{\| x_{i} - x_{j} \|^{s}}.
\end{equation} 
This energy has a physical interpretation for some particular values of $s$, i.e. for $s=1$ the Riesz energy is the Coulomb potential and in the special case $s=0$ the energy is defined by 
\[
\E{\log}(\omega_{N})=\left. \frac{d}{ds} \right|_{s=0}\E{s}(\omega_N) = \sum_{i \neq j} \log\| x_{i} - x_{j} \|^{-1} 
\]
and is related to the transfinite diameter and the capacity of the set by classical potential theory, see for example \cite{doohovskoy2011foundations}. 
\subsection{Smale's 7th problem}
Shub and Smale \cite{SS93} found a relation between the condition number (a quantity measuring the sensitivity to zero finding) of polynomials and the logarithmic energy of associated spherical points. Inspired by this relation, they proposed a problem that is nowadays known as Smale's 7th problem \cite{Sm2000}: find a constructive (and fast) way to produce $N$ points with quasioptimal logarithmic energy. More exactly, on input $N$, one must produce a set of $N$ points $\omega_N$ on the unit sphere such that
\[
\E{\log}(\omega_{N})-m_N\leq c\log N,
\]
where $c$ is some universal constant and $m_N$ is the minimum possible value of $\E{\log}$ among all collections of $N$ spherical points.

\subsection{The value of $m_N$} A major difficulty in Smale's 7th problem is that the value of $m_N$ is not even known up to precision $\log N$. A series of papers \cite{Wagner,RSZ94,Dubickas,Brauchart2008} gave upper and lower bounds for the value of $m_N$. The last word has been given in \cite{BS18} where this value is related to the minimum renormalized energy introduced in \cite{SS15} proving the existence of an $O(N)$ term. The current knowledge is:
\begin{equation}\label{eq:as}
m_N=W_{\log}(\S2)\,N^2-\frac12\,N\log N+C_{\log}\,N+o(N),
\end{equation}
where 
\begin{equation}\label{conjetura}
W_{\log}(\S2)=\frac{1}{(4\pi)^2}\int_{x,y\in\S2}\log\|x-y\|^{-1}\,d(x,y)=\frac{1}{2}-\log 2
\end{equation}
is the continuous energy and $C_{\log}$ is a constant. Combining  \cite{Dubickas} with \cite{BS18} it is known that
\[
-0.2232823526\ldots\leq C_{\log}\leq 2\log 2 +\frac12\log\frac23+3\log\frac{\sqrt\pi}{\Gamma(1/3)}=-0.0556053\ldots,
\]
and indeed the upper bound for $C_{\log}$ has been conjectured to be an equality using two different approaches \cite{BHS12,BS18}.
\subsection{Explicit constructions towards Smale's 7th problem}
Several point sequences that seem to have low logarithmic energy have been proposed. In \cite{dolomites} we find a number of families of points (some of them are random and some of them are not) together with numerical evidence of their properties for values as high as $N=50.000$ spherical points.  However, obtaining theoretical results for the properties of these sequences has proved a very hard task. Of course one can just run a generic optimization algorithm starting on those sequences and get seemingly optimal collections of points but theoretical results about the asymptotical properties of the output of such methods are quite out of reach.

Theoretical computations of the energy of constructively feasible families of points has only been done in a few cases. 

Points coming from the spherical ensemble (that can be seen after a stereographic projection as the eigenvalues of $A^{-1}B$ where $A$ and $B$ are random Gaussian matrices, see \cite{krishnapur2009}) have been proved in \cite{EJP3733} to have average logarithmic energy
\[
W_{\log}(\S2)\,N^2-\frac12\,N\log N+c_1\,N+o(N),
\]
where $c_1=\log 2-\gamma/2=0.404539348109\ldots$ (here, $\gamma$ is the Euler--Mascheroni constant). 

On the other hand, points obtained (after the stereographic projection) as zeros of certain random polynomials have been studied in \cite{ABS11} proving that the expected value of the logarithmic energy in this case is
\[
W_{\log}(\S2)\,N^2-\frac12\,N\log N+c_2\,N+o(N),
\]
where $c_2=-W_{\log}(\S2)=0.1931471805\ldots$ 

Both $c_1$ and $c_2$ are quite far from the known upper bound for $C_{\log}$ and thus also far from providing an answer to Smale's problem.

\subsection{Main result: the Diamond ensemble}

In this paper, we define a collection of random points, the {\em Diamond ensemble} $\diamond(N)$, depending on several parameters. For appropriate choices of the parameters, our construction produces families of points that very much resemble some already known families for which the asymptotic expansion of the logarithmic energy is unknown, such as the octahedral points or the zonal equal area nodes, see \cites{RSZ94, HOLHOS20141092,  dolomites}.
Indeed our paper can be seen as a follow up of \cite[Theorem 3.2]{RSZ94}.

A quasioptimal choice of these parameters is described in Section \ref{sec:qode}, we call the resulting set the {\em quasioptimal Diamond ensemble}, and its main interest is that we can prove the following bound.

\begin{thm}\label{ThmPpal2}

The expected value of the logarithmic energy of the quasioptimal Diamond ensemble described in Section \ref{sec:qode} is
	\begin{equation*}
	W_{\log}(\S2)\,N^2-\frac12\,N\log N+c_\diamond\,N+o(N),
	\end{equation*}
	where $c_\diamond=- 0.0492220914515784\ldots$ satisfies
	\begin{multline*}
	14340\,c_\diamond=19120\log239 - 2270\log227 - 1460\log73 - 265\log53
	- 1935\log43\\ - 930\log31 - 1710\log19 - 1938\log17 + 19825\log13
	+ 1750\log7 \\- 4250\log5 - 131307\log3 + 56586\log2 - 7170.
	\end{multline*}
\end{thm}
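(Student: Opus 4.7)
The plan is to decompose the expected logarithmic energy of the Diamond ensemble into two contributions: the \emph{intra-parallel} energy, coming from pairs of points on the same parallel, and the \emph{inter-parallel} energy, coming from pairs lying on distinct parallels. The ensemble places $n_j$ equally spaced points on a parallel at height $z_j\in(-1,1)$ with an independent uniform random rotation for each parallel; once the heights $\{z_j\}$ and multiplicities $\{n_j\}$ (prescribed in Section \ref{sec:qode}) are fixed, the full $\E{\log}$ becomes a deterministic function of the rotations, and I would compute its expectation parallel-by-parallel.

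For the intra-parallel contribution I would use the classical identity $\prod_{k=1}^{n-1}|1-e^{2\pi i k/n}|=n$, which after scaling by the radius $r_j=\sqrt{1-z_j^2}$ yields an explicit closed form for the logarithmic energy of $n_j$ equally spaced points on the $j$-th parallel, of the shape $-n_j\log n_j-n_j(n_j-1)\log r_j$. Summing over $j$ already produces the dominant $-\tfrac12 N\log N$ term of the expansion, together with a piece whose asymptotics are controlled by a Riemann-sum argument over the heights. For the inter-parallel contribution the key input is the exact expectation
\begin{equation*}
\mathbb{E}_{\phi}\Bigl[\sum_{k,l}\log\|x_k-y_l\|^{-1}\Bigr]
\end{equation*}
for two randomly rotated parallels, obtained by factorising $\|x_k-y_l\|^2=(z_1-z_2)^2+|r_1-r_2e^{i\Delta\theta}|^2$ and applying Jensen's formula (or the resultant-type identity for the product $\prod_{k,l}|a-be^{i(\theta_k-\theta_l-\phi)}|$ averaged over $\phi$). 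This collapses the per-pair expectation to a function of $z_1,z_2,n_1,n_2$ that depends on the $n_j$'s essentially only through $\max(n_1,n_2)$.

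Once the energy is written as a double sum over pairs of parallels plus a single sum over parallels, I would carry out the asymptotic analysis in three layers. The $W_{\log}(\S2)\,N^2$ term arises from a two-dimensional Riemann-sum approximation of $\tfrac{1}{(4\pi)^2}\int\int\log\|x-y\|^{-1}$ against the uniform measure; the $-\tfrac12 N\log N$ term is produced by the diagonal $\sum_j n_j\log n_j$; and the $c_\diamond N$ term is captured by tracking the sub-leading Euler--Maclaurin corrections in both sums, plus the explicit finite contribution from the poles and from parallels whose multiplicity does not grow with $N$.

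The main obstacle will be extracting the exact value of $c_\diamond$. The appearance in the stated constant of $\log p$ for the specific primes $p\in\{2,3,5,7,13,17,19,31,43,53,73,227,239\}$ strongly suggests that the quasioptimal parameters of Section \ref{sec:qode} are specific integers (and rational combinations thereof) whose contributions to the various $n_j\log n_j$ terms and to the boundary corrections must cancel and combine in an exact way. Bookkeeping these cancellations rigorously, and in particular separating the $o(N)$ remainder from the $O(N)$ constant in each of the Euler--Maclaurin corrections, is the technical core of the argument; the rest of the proof is a fairly mechanical assembly of the pieces described above.
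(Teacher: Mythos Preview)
Your overall architecture---separate intra-parallel from inter-parallel contributions, average the latter over the random rotations, then run Euler--Maclaurin---matches the paper's. Two points, one minor and one substantial.

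First, the minor one: your claim that the averaged inter-parallel energy ``depends on the $n_j$'s essentially only through $\max(n_1,n_2)$'' is not right. Once you average over the relative phase, every one of the $n_1n_2$ pairs contributes the same expectation, and the result is exactly
\[
-\,\frac{n_1n_2}{2}\,\log\bigl(1-z_1z_2+|z_1-z_2|\bigr),
\]
i.e.\ the dependence on the multiplicities is through the \emph{product} $n_1n_2$, with a factor that depends only on the heights (this is Corollary~\ref{CorAveEner2}). There is no $\max$ here.

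Second, and more importantly, you are missing the step that makes the constant $c_\diamond$ actually computable. You plan to attack the resulting \emph{double} sum $\sum_{j,k}n_jn_k\log(1-z_jz_k+|z_j-z_k|)$ by a two-dimensional Riemann-sum/Euler--Maclaurin argument. The paper never does this. The heights $z_j$ in Section~\ref{sec:qode} are not arbitrary: they are precisely the critical heights of Proposition~\ref{Propminparallel}, i.e.\ $z_j=1-(1+r_j+2\sum_{k<j}r_k)/(N-1)$. Plugging this particular relation into the double sum and exploiting the symmetry $r_j=r_{p+1-j}$ produces a telescoping/Abel-summation identity (Lemma~\ref{lem3}) that collapses the entire double sum to the \emph{single} sum
\[
(N-1)\sum_{j=1}^{p}r_j(1-z_j)\log(1-z_j)\;-\;\sum_{j=1}^{p}r_j\log(1-z_j).
\]
Only after this reduction (Theorem~\ref{cor:nuevasuma}) does the Euler--Maclaurin analysis become a one-dimensional computation over the six piecewise-linear segments of $r(x)$, each giving elementary integrals that {\tt Maxima} can evaluate in closed form and from which the thirteen $\log p$ terms in $c_\diamond$ fall out. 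Without this collapse you would need second-order control of a two-variable trapezoidal rule with a kernel having a derivative singularity along the diagonal $z_j=z_k$; getting the $O(N)$ term exactly from that route is substantially harder and is not what the paper does.
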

The value of the constant is thus approximately $0.0058$ far from the valued conjectured in \eqref{conjetura}. The Diamond ensemble is fully constructive: once a set of parameters is chosen, one just has to choose some uniform random numbers $\theta_1,\ldots,\theta_p\in[0,2\pi]$ and then the $N$ points are simply given by the direct formulas shown in Section \ref{sec:qode}. It is thus extremely easy to generate these sequences of points.

As one can guess from the expression of $c_\diamond$, obtaining the exact value for that constant requires the computation of a huge number of elementary integrals and derivatives and has been done using the computer algebra package {\tt Maxima}. 
Our proof of Theorem \ref{ThmPpal2} is thus, in some sense, a computer aided proof.
A more simple example (with more simple parameters) that can actually be done by hand is presented in Section \ref{sec:simple}.


\subsection{Structure of the paper}

In Section \ref{formula} we present a formula for computing the energy of the roots of unity of some parallels.
In Section \ref{diamond} we define the Diamond ensemble and through the formula of Section \ref{formula} we compute its associated logarithmic energy.
In Section \ref{examples} we present some concrete examples of the Diamond ensemble. 
In particular a simple model that can be made by hand, a more elaborated example and the quasioptimal Diamond ensemble in terms of minimizing logarithmic energy.
In this section we also give the asymptotic expansion of the logarithmic energy associated to every single example.
Section \ref{section_proofs} is devoted to proofs and
Appendix \ref{appendix} contains some bounds for the error of the trapezoidal rule.


\section{A general construction and a formula for its average logarithmic energy}\label{formula}
Fix $z\in(-1,1)$. The parallel of height $z$ in the sphere $\mathbb{S}^{2} \subset \mathbb{R}^{3}$ is simply the set of points $x\in\S2$ such that $\langle x,(0,0,1)\rangle=z$. A general construction of points can then be done as follows:
\begin{enumerate}
	\item Choose a positive integer $p$ and $z_1,\ldots,z_p\in(-1,1)$. Consider the $p$ parallels with heights $z_1,\ldots,z_p$.
	\item For each $j$, $1\leq j\leq p$, choose a number $r_j$ of points to be allocated on parallel $j$.
	\item Allocate $r_j$ points in parallel $j$ (which is a circumference) by projecting the $r_j$ roots of unity onto the circumference and rotating them by random phase $\theta_j\in[0,2\pi]$.
	\item To the already constructed collection of points, add the North and South pole $(0,0,1)$ and $(0,0,-1)$.
\end{enumerate}
We will denote this random set by $\Omega(p,r_{j},z_{j})$. Explicit formulas for this construction are easily produced: points in parallel of height $z_j$ are of the form
\begin{equation}\label{points}
\begin{split}
& x = \left( \sqrt{1 - z_{j}^{2}} \cos \theta, \sqrt{1 - z_{j}^{2}} \sin \theta, z_{j} \right) \\
\end{split}
\end{equation}
for some $\theta\in[0,2\pi]$ and thus the set we have described agrees with the following definition.
\begin{defn}\label{DefnOmegaGeneral}
	
	Let $\Omega(p,r_{j},z_{j})$ be the following set of points
	
	\begin{equation}\label{DefOmega}
	\Omega(p,r_{j},z_j) = 
	\begin{cases} 
	\mathcal{N} = (0,0,1) \\
	x_{j}^{i} = \left( \sqrt{1-z_{j}^{2}}\cos\left( \frac{2\pi i}{r_{j}} + \theta_{j} \right), \sqrt{1-z_{j}^{2}}\sin\left( \frac{2\pi i}{r_{j}}  + \theta_{j}\right), z_{j} \right) \\ 
	\mathcal{S} = (0,0,-1) 
	\end{cases}
	\end{equation}
	
	\noindent where $r_{j}$ is the number of roots of unity that we consider in the parallel $j$, $1 \leq j \leq p$ is the number of parallels, $1 \leq i \leq r_{j}$ and $0 \leq \theta_{j} < 2\pi$ is a random angle rotation in the parallel $j$.
	
\end{defn}

The following proposition is easy to prove.

\begin{prop}\label{PropAveEner}
Let $x$ be chosen uniformly and randomly in the parallel of height $z_i$ and let $y$ be chosen uniformly and randomly in the parallel of height $z_j$. The average of the logarithmic energy associated to $x$ and $y$ is
\begin{equation*}
\begin{split}
- \frac{\log \left( 1 - z_{i}z_{j} + |z_{i} - z_{j}| \right)}{2}.
\end{split}
\end{equation*}
\end{prop}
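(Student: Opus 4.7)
The plan is to reduce the two-angle expectation to a single-angle integral by exploiting the rotational symmetry of the parallels, and then evaluate that integral by a well-known computation.

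First I would parametrize $x = (\sqrt{1-z_i^2}\cos\theta, \sqrt{1-z_i^2}\sin\theta, z_i)$ and $y = (\sqrt{1-z_j^2}\cos\phi, \sqrt{1-z_j^2}\sin\phi, z_j)$ with $\theta$ and $\phi$ independent and uniform on $[0, 2\pi]$. Expanding $\|x-y\|^2$ and using $\cos\theta\cos\phi + \sin\theta\sin\phi = \cos(\theta-\phi)$, the cross terms collapse and I obtain
$$\|x-y\|^2 = 2(1 - z_i z_j) - 2\sqrt{(1-z_i^2)(1-z_j^2)}\,\cos(\theta-\phi).$$
By translation invariance of Lebesgue measure on $[0, 2\pi]$, the distribution of $\|x-y\|$ depends only on $\alpha = \theta-\phi$, so the average logarithmic energy reduces to
$$E\bigl[-\log\|x-y\|\bigr] = -\frac{1}{4\pi}\int_0^{2\pi}\log\bigl[2(1-z_i z_j) - 2\sqrt{(1-z_i^2)(1-z_j^2)}\,\cos\alpha\bigr]\,d\alpha.$$

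The key computational ingredient is the classical identity
$$\frac{1}{2\pi}\int_0^{2\pi}\log(a - b\cos\alpha)\,d\alpha = \log\frac{a + \sqrt{a^2 - b^2}}{2}, \qquad a \geq |b| \geq 0,$$
which can be proved by factoring $a - b\cos\alpha$ as a constant multiple of $(1 - \rho e^{i\alpha})(1 - \rho e^{-i\alpha})$ for an appropriate $\rho \in [0,1]$ and applying the standard formula $\tfrac{1}{2\pi}\int_0^{2\pi}\log|1 - \rho e^{i\alpha}|\,d\alpha = \max(0, \log|\rho|)$. Applying it with $a = 2(1 - z_i z_j)$ and $b = 2\sqrt{(1-z_i^2)(1-z_j^2)}$, a short algebraic simplification gives $a^2 - b^2 = 4(z_i - z_j)^2$, so $\sqrt{a^2 - b^2} = 2|z_i - z_j|$. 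Substituting yields exactly $-\tfrac{1}{2}\log(1 - z_i z_j + |z_i - z_j|)$.

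The only point that requires a bit of attention is integrability when $z_i = z_j = z$: there the integrand has a logarithmic singularity at $\alpha = 0$. But the factorization $2(1 - z^2)(1 - \cos\alpha) = 4(1 - z^2)\sin^2(\alpha/2)$ together with the integrability of $\log|\sin(\alpha/2)|$ on $[0, 2\pi]$ shows everything is well defined, and $\int_0^{2\pi}\log|\sin(\alpha/2)|\,d\alpha = -2\pi\log 2$ confirms that the closed form extends continuously to the diagonal. I do not expect any genuine obstacle beyond citing the classical log integral; the proof is essentially a direct calculation once the rotational symmetry reduction is made.
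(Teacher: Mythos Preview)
Your proof is correct and follows essentially the same route as the paper: parametrize, reduce the double integral to a single one via the dependence on $\theta-\phi$, and apply the classical integral $\int_0^{2\pi}\log(a-b\cos\alpha)\,d\alpha = 2\pi\log\tfrac{a+\sqrt{a^2-b^2}}{2}$ (which the paper quotes from Gradshteyn--Ryzhik as Lemma~\ref{LemInt1}). Your treatment is slightly more careful in that you absorb the factor $2$ into $a,b$ rather than splitting off $\log 2$, you sketch a proof of the integral identity, and you explicitly address integrability on the diagonal $z_i=z_j$, which the paper's version of the lemma (stated for $|b|>0$) leaves implicit.
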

The following result follows directly from Proposition \ref{PropAveEner}.

\begin{cor}\label{CorAveEner2}
Let $x_j^i$ be as in Definition \ref{DefnOmegaGeneral}. Then,
\begin{equation*}
\begin{split}
 \Esp{\theta_{j},\theta_{k}}{-\sum_{l=1}^{r_{k}} \sum_{i=1}^{r_{j}} \log \left( || x_{j}^{i} - x_{k}^{l}|| \right)} 
 = 
- r_{j}r_{k}\frac{\log \left( 1 - z_{j}z_{k} + |z_{j} - z_{k }| \right)}{2},
\end{split}
\end{equation*}
where $\theta_j,\theta_k$ are uniformly distributed in $[0,2\pi]$.
\end{cor}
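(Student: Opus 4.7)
The plan is to derive Corollary \ref{CorAveEner2} as an immediate consequence of Proposition \ref{PropAveEner} together with linearity of expectation, treating the case $j\neq k$ (the case $j=k$ degenerates, since then the sum contains the diagonal term $\|x_j^i-x_j^i\|=0$). The essential observation is that although the $r_j$ points on a single parallel share the common random phase $\theta_j$ and are therefore not mutually independent, each \emph{individual} point $x_j^i$ is marginally uniform on its parallel. Indeed, fix $i$ with $1\le i\le r_j$: the angle $\frac{2\pi i}{r_j}+\theta_j$ reduced modulo $2\pi$ has the same distribution as $\theta_j$, namely the uniform distribution on $[0,2\pi]$. So $x_j^i$ is uniformly distributed on the parallel of height $z_j$.

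With that marginal distribution in hand, I would proceed as follows. First, use linearity of expectation to write
\begin{equation*}
\Esp{\theta_j,\theta_k}{-\sum_{l=1}^{r_k}\sum_{i=1}^{r_j}\log\|x_j^i-x_k^l\|}
=-\sum_{l=1}^{r_k}\sum_{i=1}^{r_j}\Esp{\theta_j,\theta_k}{\log\|x_j^i-x_k^l\|}.
\end{equation*}
Next, fix a pair $(i,l)$ with $1\le i\le r_j$, $1\le l\le r_k$. Since $j\neq k$, the rotations $\theta_j$ and $\theta_k$ are drawn independently; hence $x_j^i$ and $x_k^l$ are independent random points, each uniformly distributed on its respective parallel. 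Applying Proposition \ref{PropAveEner} to this pair gives
\begin{equation*}
\Esp{\theta_j,\theta_k}{-\log\|x_j^i-x_k^l\|}=-\frac{\log\bigl(1-z_jz_k+|z_j-z_k|\bigr)}{2}.
\end{equation*}
The right-hand side does not depend on $i$ or $l$, so summing over the $r_jr_k$ pairs yields exactly the stated expression.

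There is essentially no obstacle here: the only subtle point, which I would flag explicitly, is that independence within a single parallel is \emph{not} required — only the marginal uniformity of each point, plus independence across distinct parallels. This is why the construction can afford to tie all $r_j$ points in parallel $j$ to a single random phase $\theta_j$ (a major saving of randomness) while still enjoying the clean formula of Proposition \ref{PropAveEner} for every cross-parallel interaction. The diagonal self-energy of each parallel, which is not covered by this corollary, will have to be handled separately in Section \ref{diamond} using the explicit roots-of-unity structure.
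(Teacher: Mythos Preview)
Your proposal is correct and matches the paper's approach: the paper simply states that the corollary ``follows directly from Proposition \ref{PropAveEner}'', and you have supplied precisely the missing details---linearity of expectation, marginal uniformity of each $x_j^i$ on its parallel, and independence of $\theta_j$ and $\theta_k$ for $j\neq k$. Your remark that the case $j=k$ is excluded is also appropriate, since the corollary is only invoked for cross-parallel interactions (quantity $C$ in the proof of Proposition \ref{PropEnergyOmegaGeneral}).
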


From Corollary \ref{CorAveEner2} we will prove the following result which gives us an expression for the expected logarithmic energy of the set $\Omega(p,r_{j},z_j)$.
\begin{prop}\label{PropEnergyOmegaGeneral}
The average logarithmic energy of points drawn from $\Omega(p,r_{j},z_j)$ is

\begin{multline*}
\Esp{\theta_{1},...,\theta_{p} \in \left[0,2\pi\right]^{p}}{\E{\log}(\Omega(p,r_{j},z_j))}
 =\\-2\log(2)
-\sum_{j=1}^{p}
r_{j}\left[
\log(4)
+ \frac{1}{2}  \log(1 - z_{j}^{2})
+ \log r_{j}\right]\\
- \sum_{j,k=1}^{p}  r_{j}r_{k}\frac{\log \left( 1 - z_{j}z_{k} + |z_{j} - z_{k }| \right)}{2}.
\end{multline*}

\end{prop}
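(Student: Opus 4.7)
The plan is to split the set $\Omega(p,r_j,z_j)$ into three kinds of points — the two poles $\mathcal{N}$, $\mathcal{S}$, and the $p$ collections of roots-of-unity-type points on the parallels — and to group the $-\log\|x-y\|$ contributions in the definition of $\E{\log}$ according to the type of the pair $(x,y)$. Four classes arise: pole-pole, pole-parallel, same-parallel, and different-parallel pairs. I will compute the expectation of each class separately and then check that the sum telescopes into the stated formula.

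First I would evaluate the easy deterministic pieces. The pole-pole pair $(\mathcal{N},\mathcal{S})$ contributes $-2\log 2$, because both ordered pairs give $-\log 2$. For a point $x_j^i$ on the parallel of height $z_j$ one has $\|x_j^i-\mathcal{N}\|^2=2(1-z_j)$ and $\|x_j^i-\mathcal{S}\|^2=2(1+z_j)$, so summing the four ordered pairings of $x_j^i$ with the two poles contributes $-\log(4(1-z_j^2))=-2\log 2 - \log(1-z_j^2)$, independent of $i$ and of $\theta_j$. Summed over $i=1,\dots,r_j$ and $j=1,\dots,p$ this yields the pole-parallel total
\begin{equation*}
-\sum_{j=1}^{p}r_j\bigl(2\log 2 + \log(1-z_j^2)\bigr).
\end{equation*}

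Next I would handle the same-parallel contributions. On parallel $j$, all pairwise distances $\|x_j^i-x_j^l\|$, $i\neq l$, are rotation-invariant, so $\theta_j$ drops out. Writing the points as $\sqrt{1-z_j^2}$ times a rotation of the $r_j$-th roots of unity (viewed in $\R^2$), the classical identity $\prod_{k=1}^{r-1}|1-e^{2\pi\i k/r}|=r$ gives $\prod_{l\neq i}\|x_j^i-x_j^l\|=(1-z_j^2)^{(r_j-1)/2}\,r_j$ for each fixed $i$, hence the ordered-pair sum on parallel $j$ equals $-\tfrac{r_j(r_j-1)}{2}\log(1-z_j^2)-r_j\log r_j$. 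For different-parallel pairs $j\neq k$, Corollary \ref{CorAveEner2} directly gives the expected contribution $-r_jr_k\tfrac{\log(1-z_jz_k+|z_j-z_k|)}{2}$ for each ordered pair $(j,k)$.

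Finally I would add the four pieces and reorganize. The double sum $\sum_{j,k=1}^p$ in the statement includes the diagonal $j=k$, which evaluates to $-\tfrac{r_j^2}{2}\log(1-z_j^2)$; combining this diagonal with the $\log(1-z_j^2)$-coefficients coming from the pole-parallel and same-parallel blocks must reproduce the prefactor $-\tfrac{r_j}{2}\log(1-z_j^2)$ inside the bracket in the target formula, and the $\log r_j$ and $\log 2$ constants must match as well. These bookkeeping checks are straightforward: the pole-parallel piece supplies $-r_j\log(1-z_j^2)$, the same-parallel piece supplies $-\tfrac{r_j(r_j-1)}{2}\log(1-z_j^2)$, and the diagonal of the double sum absorbs exactly the difference needed to collapse everything to $-\tfrac{r_j(r_j+1)}{2}\log(1-z_j^2) = -\tfrac{r_j}{2}\log(1-z_j^2) - \tfrac{r_j^2}{2}\log(1-z_j^2)$. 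The main (and only) non-routine step is the same-parallel computation via the roots-of-unity product identity; everything else is bookkeeping and an application of Corollary \ref{CorAveEner2}.
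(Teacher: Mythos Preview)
Your proof is correct and follows essentially the same decomposition as the paper: it splits the energy into pole--pole, pole--parallel, same--parallel, and different--parallel contributions, computes each, and then reorganizes the $\log(1-z_j^2)$ terms so that the diagonal of the $\sum_{j,k}$ double sum absorbs the excess. The only cosmetic difference is that you derive the same--parallel energy from the roots-of-unity product identity $\prod_{k=1}^{r-1}|1-e^{2\pi\i k/r}|=r$ directly, whereas the paper quotes the equivalent fact that the logarithmic energy of $r$ equally spaced points on a circle of radius $R$ is $-r\log r - r(r-1)\log R$ from the literature.
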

It turns out that, for any fixed choice of $r_1,\ldots,r_p$, one can compute exactly the optimal choice of the heights $z_1,\ldots,z_p$.
\begin{prop}\label{Propminparallel}
Given $\{ r_{1},...,r_{p} \}$ such that $r_{i} \in \mathbb{N}$, there exists a unique set of heights $\{z_{1},\ldots,z_{p} \}$ such that $z_{1} > \ldots > z_{p}$ and $\Esp{\theta_{1},...,\theta_{p} \in \left[0,2\pi\right]^{p}}{\E{\log}(\Omega(p,r_{j},z_j))}
$ is minimized.
The heights are:
\begin{equation*}
z_{l}
=
\frac{\displaystyle\sum_{j=l+1}^{p}r_{j} - \displaystyle\sum_{j=1}^{l-1}r_{j}}{1 + \displaystyle\sum_{j=1}^{p}r_{j}}=1-\frac{1+r_l+2\sum_{j=1}^{l-1}r_j}{N-1},
\end{equation*}
where $N=2+\sum_{j=1}^{p}r_j$ is the total number of points. 
\end{prop}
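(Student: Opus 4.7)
The plan is to exploit the algebraic identity
\begin{equation*}
1 - z_j z_k + |z_j - z_k| = (1+z_{\min(j,k)})(1-z_{\max(j,k)}),
\end{equation*}
which holds on the region $z_1 > z_2 > \cdots > z_p$ because there $z_{\min(j,k)} \geq z_{\max(j,k)}$, so that $|z_j-z_k|=z_{\min(j,k)}-z_{\max(j,k)}$ and the expansion collapses. The crucial point is that once one takes $\log$ on both sides, the log of the product splits into a sum of two one-variable pieces, which makes the full energy of Proposition~\ref{PropEnergyOmegaGeneral} \emph{separable} in $z_1,\ldots,z_p$. So the plan is: rewrite the energy, read off the coefficients of $\log(1\pm z_l)$, minimise each single-variable summand, and verify that the minimisers respect the ordering.

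Concretely, after carrying out the bookkeeping I expect to obtain, up to an additive constant that does not depend on the heights,
\begin{equation*}
F(z_1,\ldots,z_p) = -\sum_{l=1}^{p}\frac{r_l\,\alpha_l}{2}\log(1+z_l) - \sum_{l=1}^{p}\frac{r_l\,\beta_l}{2}\log(1-z_l),
\end{equation*}
with $\alpha_l = 1 + r_l + 2\sum_{k>l} r_k$ and $\beta_l = 1 + r_l + 2\sum_{k<l} r_k$, so that $\alpha_l + \beta_l = 2(N-1)$. The coefficient of $\log(1+z_l)$ collects the contribution $r_l/2$ coming from the single sum $\frac{r_l}{2}\log(1-z_l^2)$, plus $r_l(r_l+2\sum_{k>l}r_k)/2$ coming from all pairs $(j,k)$ with $\min(j,k)=l$ in the double sum; the coefficient of $\log(1-z_l)$ is assembled analogously using pairs with $\max(j,k)=l$.

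Each single-variable summand of $F$ is a strictly positive combination of $-\log(1+z_l)$ and $-\log(1-z_l)$, both strictly convex on $(-1,1)$, so $F$ is strictly convex and its unique critical point is its unique minimiser. Setting $\partial_{z_l} F = 0$ gives $\alpha_l(1-z_l)=\beta_l(1+z_l)$, hence
\begin{equation*}
z_l = \frac{\alpha_l - \beta_l}{\alpha_l + \beta_l} = \frac{\sum_{k>l} r_k - \sum_{k<l} r_k}{N-1},
\end{equation*}
which is the stated formula; the alternative form $z_l = 1 - \beta_l/(N-1)$ is immediate from $\alpha_l+\beta_l=2(N-1)$. To close the argument I would then verify that this $z$-sequence satisfies the ordering constraint: a direct subtraction gives $z_{l-1}-z_l = (r_{l-1}+r_l)/(N-1) > 0$, and $-1 < z_l < 1$ follows from $r_l\geq 1$ together with the explicit expressions for $\alpha_l$ and $\beta_l$.

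The main obstacle is not conceptual but clerical: it is the careful accounting that turns the double sum (with diagonal $j=k$ and the $\frac{1}{2}\log(1-z_l^2)$ from the single sum) into the clean separable form, pinning down the coefficients $\alpha_l$ and $\beta_l$ correctly. Once the factorisation of $1-z_jz_k+|z_j-z_k|$ is noticed, the problem decouples into $p$ convex one-variable minimisations and the rest is routine.
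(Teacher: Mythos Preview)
Your proof is correct. The paper proceeds by directly differentiating the formula of Proposition~\ref{PropEnergyOmegaGeneral} with respect to $z_l$ and solving for the critical point; the factorisation $1-z_jz_k+|z_j-z_k|=(1+z_{\min(j,k)})(1-z_{\max(j,k)})$ appears only implicitly, when the derivatives simplify to $r_jr_l/(1-z_l)$ for $j<l$ and $-r_jr_l/(1+z_l)$ for $j>l$, and the paper stops once the critical point is found without addressing uniqueness or the ordering constraint. By making the factorisation and the resulting separability explicit from the outset, you obtain strict convexity of $F$ on $(-1,1)^p$ (hence uniqueness of the minimiser) and the verification $z_{l-1}-z_l=(r_{l-1}+r_l)/(N-1)>0$ essentially for free, so your argument follows the same computational route but is in fact more complete than the paper's.
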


From now on we will denote by $\Omega(p,r_{j})$ the set $\Omega(p,r_{j},z_{j})$ where the $z_{j}$ are chosen as in Proposition \ref{Propminparallel}.
With this choice of $z_j$ we have the main result of this section:
\begin{thm}\label{cor:nuevasuma}
Let $p=2M-1$ be an odd integer.
If $r_j=r_{p+1-j}$ and $z_j$ are chosen as in Proposition \ref{Propminparallel} we then have
	\begin{equation*}
	\begin{split}
	&
	\Esp{\theta_{1},...,\theta_{p} \in \left[0,2\pi\right]^{p}}{\E{\log}(\Omega(p,r_{j}))}
	\\
	& =
	-(N-1)\log(4)-\sum_{j=1}^{p}r_{j}\log r_{j}-(N-1)\sum_{j=1}^pr_j(1-z_j)\log(1-z_j)
	\\
	& =
	-(N-1)\log(4)+r_M\log r_M-2\sum_{j=1}^{M}r_{j}\log r_{j}\\&-(N-1)\sum_{j=1}^Mr_j(1-z_j)\log(1-z_j)-(N-1)\sum_{j=1}^Mr_j(1+z_j)\log(1+z_j).
	\end{split}
	\end{equation*}
\end{thm}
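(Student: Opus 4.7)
The plan is to start directly from Proposition \ref{PropEnergyOmegaGeneral} and rewrite the double sum using the elementary identity
\[
1-z_jz_k+|z_j-z_k|=(1-\min(z_j,z_k))(1+\max(z_j,z_k))\quad (j\neq k),
\]
which splits each off-diagonal term into a $\log(1-z_l)$ piece and a $\log(1+z_l)$ piece. After collecting all contributions (including the diagonal $\frac{1}{2}r_j^2\log(1-z_j^2)$ terms), the coefficient of $\log(1-z_l)$ becomes $\frac{1}{2}r_l\bigl(r_l+2\sum_{j<l}r_j\bigr)$ and the coefficient of $\log(1+z_l)$ becomes $\frac{1}{2}r_l\bigl(r_l+2\sum_{j>l}r_j\bigr)$.

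Next I would substitute the explicit optimal heights. Proposition \ref{Propminparallel} gives
\[
(N-1)(1-z_l)=1+r_l+2\sum_{j<l}r_j,\qquad (N-1)(1+z_l)=1+r_l+2\sum_{j>l}r_j,
\]
so each of the two coefficients above simplifies to $\tfrac{1}{2}r_l\bigl((N-1)(1\mp z_l)-1\bigr)$. The $-\tfrac12 r_l$ pieces combine to produce $-\tfrac12\sum_l r_l\log(1-z_l^2)$, which exactly cancels the $-\tfrac12\sum_j r_j\log(1-z_j^2)$ contribution sitting in Proposition \ref{PropEnergyOmegaGeneral}. Collecting the remaining pieces, together with the identity $-2\log 2-\log(4)\sum_jr_j=-(N-1)\log 4$ (since $N-1=1+\sum r_j$), yields the first displayed form:
\[
\mathbb{E}[\mathcal{E}_{\log}(\Omega(p,r_j))]=-(N-1)\log 4-\sum_j r_j\log r_j-(N-1)\sum_{j=1}^{p}r_j(1-z_j)\log(1-z_j),
\]
and it is worth emphasizing that up to this point the hypothesis $r_j=r_{p+1-j}$ was \emph{not} used.

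For the second expression I would invoke the symmetry. Since $p=2M-1$ and $r_j=r_{p+1-j}$, Proposition \ref{Propminparallel} forces $z_j=-z_{p+1-j}$, and in particular $z_M=0$. Therefore, reindexing $j\mapsto p+1-j$ in the tail of the sum gives
\[
\sum_{j=M+1}^{p}r_j(1-z_j)\log(1-z_j)=\sum_{j=1}^{M-1}r_j(1+z_j)\log(1+z_j),
\]
and the $j=M$ term of $\log(1+z_j)$ vanishes because $z_M=0$, so one may freely extend the sum up to $M$. A parallel splitting applied to $\sum_j r_j\log r_j=2\sum_{j=1}^{M}r_j\log r_j-r_M\log r_M$ produces exactly the second displayed identity.

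The only nonroutine step is the bookkeeping in the first paragraph: one must be careful that the diagonal $j=k$ terms are treated consistently with the off-diagonal ones and that the factor-of-$2$ coming from the $j<k$ versus $j>k$ double counting is correctly absorbed. After the cancellation with $-\tfrac12\sum r_j\log(1-z_j^2)$ appears, everything else is a direct substitution and a symmetry argument, so I do not anticipate any serious obstacle beyond writing the algebra cleanly.
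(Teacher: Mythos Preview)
Your argument is correct and considerably more economical than the paper's. The paper reaches the first identity through a long auxiliary lemma (Lemma~\ref{lem3}) that manipulates the double sum via the quantities $a_{j,k}=r_jr_k\log(1-z_jz_k+|z_j-z_k|)$ and $b_{j,k}=r_jr_k\log(1+z_jz_k+|z_j+z_k|)$, repeatedly exploiting the symmetries $a_{j,k}=a_{k,j}$, $a_{j,p+1-k}=b_{j,k}$, $a_{p+1-j,p+1-k}=a_{j,k}$, and only after many rearrangements arrives at the closed form. Your factorisation $1-z_jz_k+|z_j-z_k|=(1-\min(z_j,z_k))(1+\max(z_j,z_k))$ bypasses all of that: it decouples the double sum immediately, and the substitution $r_l+2\sum_{j<l}r_j=(N-1)(1-z_l)-1$ together with its companion $r_l+2\sum_{j>l}r_j=(N-1)(1+z_l)-1$ then does the rest. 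This is a genuinely cleaner route. As a bonus, your computation yields a formula valid \emph{without} the symmetry hypothesis, namely
\[
\mathbb{E}[\mathcal{E}_{\log}]=-(N-1)\log 4-\sum_j r_j\log r_j-\tfrac{N-1}{2}\sum_j r_j\bigl[(1-z_j)\log(1-z_j)+(1+z_j)\log(1+z_j)\bigr],
\]
whereas the paper's lemma already bakes in $r_j=r_{p+1-j}$.

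One small correction: your parenthetical remark that ``up to this point the hypothesis $r_j=r_{p+1-j}$ was not used'' is not quite right. Without that hypothesis you land on the half--half formula displayed above, not on the theorem's first form $-(N-1)\sum_j r_j(1-z_j)\log(1-z_j)$. You need the reindexing $j\mapsto p+1-j$ (equivalently $z_j\mapsto -z_j$) once more to merge the two half-sums into the single one with coefficient $N-1$. This does not affect the validity of your proof, only the commentary; just move the invocation of symmetry one line earlier.
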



\section{The Diamond ensemble}\label{diamond}

We are now ready to define the construction that leads to Theorem \ref{ThmPpal2}. 
It amounts to choose some $r_1,\ldots,r_p$ such that the energy bound computed in Theorem \ref{cor:nuevasuma} is as low as possible and can be computed up to order $o(N)$. 
Our construction is based in the following heuristic argument.
\subsection{A heuristic argument}
Let us choose $z_1,\ldots,z_p$ in such a way that they define $p$ equidistant (for the spherical distance) parallels on the sphere. In other words,
\[
z_j=\cos\frac{j}{p+1}
\]
The distance between two consecutive parallels is $\pi/(p+1)$. 
We would like to choose $r_j$ in such a way that the distance between two consecutive points of the same parallel is approximately equal to some constant times $\pi/(p+1)$. Since parallel of height $z_j$ is a circumference of radius $\sin(j/(p+1))$, this goal is attained by setting for example
\begin{equation}\label{eq:ideal}
r_{j}
=
\frac{K_0\pi\sin \left( \frac{j\pi}{p+1} \right)}{\sin \left( \frac{\pi}{2(p+1)} \right)}
.
\end{equation}
Let us forget for a moment that this gives an impossible construction (since the $r_j$ will not be integer numbers). One can then plug in in Proposition \ref{PropEnergyOmegaGeneral} these values of $z_j$ and $r_j$. After a considerable amount of work the right--hand term in Proposition \ref{PropEnergyOmegaGeneral} can be proved to have the asymptotic expansion
\begin{equation}\label{eq:auxideal}
\begin{split}
W_{\log}(\S2)N^2
-\frac{1}{2} N \log(N)
+ \left(\frac{K_0\pi}{6}-\frac{1}{2}\log K_0-\frac{\log\pi}{2}\right)N
+ \mathfrak{o}(N),
\end{split}
\end{equation}
where $N=2+r_1+\cdots+r_p$ is the total number of points in the sphere. The optimal value of $K_0$ is $K_0=3/\pi$, yielding the asymptotic
\begin{equation*}
\begin{split}
&
\Esp{\theta_{1},...\theta_{p} \in \left[0,2\pi\right]^{p}}{\E{\log}(\Omega)}
\leq
- \frac{\log\left( \frac{4}{e} \right)}{2} N^2
-\frac{1}{2} N \log(N)
+ \frac{1-\log(3)}{2} N
+ \mathfrak{o}(N),
\end{split}
\end{equation*}
where $\frac{1-\log(3)}{2} \approx -0.0493$. 
\medskip

Unfortunately, this reasoning does not actually produce collections of points since as pointed out above the number of points in each parallel must be an integer number. The computation of the formula \eqref{eq:auxideal} is done with techniques similar to the ones used below but we do not include it since we actually only use it as an inspiration of our true construction below.
\subsection{An actual construction}
Inspired on the heuristic argument above, we will try to search for sets of the form $\Omega(p,r_j)$ such that the $r_{j}$ are integer numbers close to $\frac{3\sin \left( \frac{j\pi}{p+1} \right)}{\sin \left( \frac{\pi}{2(p+1)} \right)}$. We will then choose the optimal values for the $z_j$ given by Proposition \ref{Propminparallel}. Our approach is to consider different piecewise linear approximations to the formula \eqref{eq:ideal} with $K_0=3/\pi$.

\begin{defn}
Let $p,M$ be two positive integers with $p=2M-1$ odd and let $r_{j}=r(j)$ where $r:[0,2M]\to\R$ is a continuous piecewise linear function satisfying $r(x)=r(2M-x)$ and
\begin{equation*}
r(x)
=
   \begin{cases} 
      \alpha_1+\beta_1 x              & \mbox{if } 0=t_0 \leq x \leq t_1   \\
      \vdots&\vdots\\
\alpha_n+\beta_n x              & \mbox{if } t_{n-1} \leq x \leq t_n=M
   \end{cases}
\end{equation*}
Here, $[t_0,t_1,\ldots,t_n]$ is some partition of $[0,M]$ and all the $t_\ell, \alpha_\ell,\beta_\ell$ are assumed to be integer numbers.

We assume that $\alpha_1=0$, $\alpha_\ell,\beta_\ell\geq0$ and $\beta_1>0$ and there exists a constant $A\geq2$ not depending on $M$ such that $\alpha_\ell\leq AM$ and $\beta_\ell\leq A$. We also assume that $t_1\geq cM$ for some $c\geq0$. Moreover, let $z_{j}$ be as defined in Proposition \ref{Propminparallel}. 

We call the set of points defined this way the \textit{Diamond ensemble} and we denote it by $\diamond (N)$, omiting in the notation the dependence on all the parameters $n$, $t_1,\ldots,t_n$, $\alpha_1,\ldots,\alpha_n$, $\beta_1,\ldots,\beta_n$. Note that the total number of points is
\[
N=2-(\alpha_n+\beta_nM)+2\sum_{\ell=1}^n\;\sum_{j=t_{\ell-1}+1}^{t_\ell}(\alpha_\ell+\beta_\ell j).
\]
We also denote by $N_\ell$ the total number of points in up to $t_{\ell-1}$, that is
\[
N_\ell=\sum_{j=1}^{t_{\ell-1}-1}r_j.
\]
Note that if $j\in[t_{\ell-1},t_\ell]$ then
\begin{multline}\label{zjs}
z_j=1-\frac{1+r_j+2\sum_{k=1}^{j-1}r_k}{N-1}=1-\frac{1+2N_j-r_j+2\sum_{k=t_{\ell-1}}^{j}(\alpha_\ell+\beta_\ell k)}{N-1}\\
=1-\frac{1+2N_j-(\alpha_\ell+\beta_\ell j)+2\alpha_\ell(j-t_{\ell-1}+1)+\beta_\ell(j+t_{\ell-1})(j-t_{\ell-1}+1)}{N-1}
\end{multline}
We thus consider the function $z(x)$ piecewise defined by the degree $2$ polynomial
\begin{multline}\label{zjs2}
z_\ell(x)=1-\frac{1+2N_j-(\alpha_\ell+\beta_\ell x)+2\alpha_\ell(x-t_{\ell-1}+1)+\beta_\ell(x+t_{\ell-1})(x-t_{\ell-1}+1)}{N-1}
\end{multline}
\end{defn}
and note that $z_j=z(j)$.
\subsection{An exact formula for the expected logarithmic energy of the Diamond ensemble}
From Theorem \ref{cor:nuevasuma}, the expected value of the log-energy of $\diamond(N)$ is given by
	\begin{equation*}
	\begin{split}
	&
	\Esp{\theta_{1},...,\theta_{p} \in \left[0,2\pi\right]^{p}}{\E{\log}(\diamond(N))}
	\\
	& =
	-(N-1)\log(4)+r(M)\log r(M)-2\sum_{j=1}^{M}r(j)\log r(j)\\&-(N-1)\sum_{j=1}^Mr(j)(1-z(j))\log(1-z(j))-(N-1)\sum_{j=1}^Mr(j)(1+z(j))\log(1+z(j)).
	\end{split}
	\end{equation*}
We write the sums as instances of a trapezoidal composite rule. Recall that for a function  $f:[a,b]\to\R$ with $a<b$ integers, the composite trapezoidal rule is
\begin{equation}\label{eq:int_sumprevio}
T_{[a,b]}(f)=\frac{f(a)+f(b)}{2}+\sum_{j=a+1}^{b-1} f(j).
\end{equation}
We then have
\begin{cor}\label{cor:otrasuma}
The expected logarithmic energy of points drawn from the Diamond ensemble equals
		\begin{equation*}
		\begin{split}
		&
		\Esp{\theta_{1},...,\theta_{p} \in \left[0,2\pi\right]^{p}}{\E{\log}(\diamond(N))} =-(N-1)\log(4)-2\sum_{\ell=1}^nT_{[t_{\ell-1},t_\ell]}(f_\ell)
		\\&-(N-1)\sum_{\ell=1}^nT_{[t_{\ell-1},t_\ell]}(g_\ell)-(N-1)\sum_{\ell=1}^nT_{[t_{\ell-1},t_\ell]}(h_\ell),
		\end{split}
		\end{equation*}
		where for $1\leq \ell\leq n$ the functions $f_\ell,g_\ell,h_\ell$ are defined in the interval $[t_\ell-1,t_\ell]$ and satisfy 
		\begin{align*}
		f_\ell(x)=&(\alpha_\ell+\beta_\ell x)\log(\alpha_\ell+\beta_\ell x)\\
		g_\ell(x)=&(\alpha_\ell+\beta_\ell x)(1-z_\ell(x))\log(1-z_\ell(x))\\
		h_\ell(x)=&(\alpha_\ell+\beta_\ell x)\left(1+z_\ell(x)\right)\log(1+z_\ell(x))\\
		\end{align*}
\end{cor}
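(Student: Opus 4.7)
The plan is to start from the symmetric expression for the expected log-energy produced by Theorem \ref{cor:nuevasuma} and to recognize each of its three $\sum_{j=1}^M$ sums as a composite trapezoidal rule evaluated on the appropriate piecewise polynomial data. The central identity is that for any function $\phi:\{0,\ldots,M\}\to\R$ and any integer partition $0=t_0<t_1<\cdots<t_n=M$,
\[
\sum_{\ell=1}^n T_{[t_{\ell-1},t_\ell]}(\phi)=T_{[0,M]}(\phi)=\sum_{j=0}^M\phi(j)-\frac{\phi(0)+\phi(M)}{2},
\]
since every interior breakpoint $t_1,\ldots,t_{n-1}$ is picked up with total weight $\tfrac12+\tfrac12=1$ from the two adjacent pieces, while the extreme points $0$ and $M$ carry weight $1/2$.

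Next I would observe that $f_\ell,g_\ell,h_\ell$ are exactly the restrictions to $[t_{\ell-1},t_\ell]$ of three globally defined continuous functions on $[0,M]$, namely $f(x)=r(x)\log r(x)$, $g(x)=r(x)(1-z(x))\log(1-z(x))$, and $h(x)=r(x)(1+z(x))\log(1+z(x))$. Their continuity at each breakpoint $t_\ell$ is automatic from the continuity of $r$ and of the piecewise quadratic $z$ defined in \eqref{zjs2}. The task then reduces to evaluating the two boundary contributions $\phi(0)$ and $\phi(M)$ for each of $\phi=f,g,h$.

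Since $\alpha_1=0$ we have $r(0)=0$, and (with the convention $0\log 0=0$) therefore $f(0)=g(0)=h(0)=0$. The symmetry $r(x)=r(2M-x)$, i.e.\ $r_j=r_{p+1-j}$, combined with the formula of Proposition \ref{Propminparallel} yields $z_M=0$, because $\sum_{j=M+1}^p r_j=\sum_{j=1}^{M-1}r_j$; thus $g(M)=h(M)=0$ but $f(M)=r(M)\log r(M)\neq 0$. Applying the central identity to each of $f,g,h$ and multiplying by the appropriate factors gives
\[
-2\sum_{\ell=1}^n T_{[t_{\ell-1},t_\ell]}(f_\ell)=-2\sum_{j=1}^M r(j)\log r(j)+r(M)\log r(M),
\]
while the $g$ and $h$ analogues have no surviving endpoint correction; substituting into Theorem \ref{cor:nuevasuma} yields the claim. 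The calculation is essentially bookkeeping, with the only delicate point being the vanishing of $z_M$ at the middle parallel, which is precisely what converts the loose $r_M\log r_M$ of Theorem \ref{cor:nuevasuma} into the half-weight endpoint contribution of the trapezoidal rule applied to $f$ at $x=M$.
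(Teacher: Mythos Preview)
Your proof is correct and follows exactly the route the paper has in mind: the paper does not write out a proof for this corollary, merely prefacing it with the remark ``We write the sums as instances of a trapezoidal composite rule,'' and your argument supplies precisely those details---the additivity of the composite rule over the partition, the vanishing at $x=0$ via $\alpha_1=0$, and the vanishing of $g(M)$ and $h(M)$ via $z_M=0$, leaving the stray $r(M)\log r(M)$ to be absorbed as the half-weight endpoint of $T_{[0,M]}(f)$.
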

\subsection{An asymptotic formula for the expected logarithmic energy of the Diamond ensemble}
Since $f_\ell$ is a continuous function for $1\leq\ell\leq n$, the trapezoidal rule $T_{[t_{\ell-1},t_\ell]}(f_\ell)$ approaches the integral of $f_\ell$. Moreover,
\begin{lem}\label{lem:integralfacil}
	For $1\leq \ell\leq n$ we have
	\[
	\left|T_{[t_{\ell-1},t_\ell]}(f_\ell)-\int_{t_{{\ell-1}}}^{t_\ell}f_\ell(x)\,dx\right|\leq
	(t_\ell-t_{\ell-1})3K\log(2KM)\leq 3KM\log(2KM).\]
\end{lem}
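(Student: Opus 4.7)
The plan is to invoke the standard error bound for the composite trapezoidal rule with step one: for $f\in C^2([a,b])$ with integer endpoints,
\[
\left|T_{[a,b]}(f)-\int_a^b f(x)\,dx\right|\leq \frac{b-a}{12}\sup_{[a,b]}|f''|.
\]
For $f_\ell(x)=(\alpha_\ell+\beta_\ell x)\log(\alpha_\ell+\beta_\ell x)$ a direct computation gives
\[
f_\ell'(x)=\beta_\ell\bigl(1+\log(\alpha_\ell+\beta_\ell x)\bigr),\qquad f_\ell''(x)=\frac{\beta_\ell^2}{\alpha_\ell+\beta_\ell x}.
\]
Since $\beta_\ell\leq A$ by assumption, the only real issue is how small the denominator $\alpha_\ell+\beta_\ell x$ can be.

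For $\ell\geq 2$ (or for $\ell=1$ restricted to the subinterval $[1,t_1]$), we have $t_{\ell-1}\geq 1$, and since every $r_j=r(j)$ is required to be a positive integer, $\alpha_\ell+\beta_\ell x\geq 1$ throughout $[t_{\ell-1},t_\ell]$. Hence $|f_\ell''|\leq A^2$, and the trapezoidal error on such a piece is at most $(t_\ell-t_{\ell-1})A^2/12$, which is comfortably below the claimed bound once $K$ is taken to be a constant bounding $r(j)/M$ (one can take for instance $K\leq 2A$, since $\alpha_\ell+\beta_\ell x\leq AM+AM$).

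The only delicate piece is the case $\ell=1$ with $\alpha_1=0$ on the first unit interval $[0,1]$, because there $f_1''(x)=\beta_1/x$ blows up as $x\to 0$ and the $C^2$ estimate no longer applies. Here I would simply compute both sides by hand: $T_{[0,1]}(f_1)=\tfrac{1}{2}\beta_1\log\beta_1$ and $\int_0^1 \beta_1 x\log(\beta_1 x)\,dx=\tfrac{1}{2}\beta_1\log\beta_1-\tfrac{1}{4}\beta_1$, so the local error is exactly $\beta_1/4\leq A/4$. Adding this to the contribution from $[1,t_1]$ treated as above yields a total error on $[0,t_1]$ of size $O(A+A^2 M)$, again far less than $(t_\ell-t_{\ell-1})\cdot 3K\log(2KM)$. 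The outer inequality $(t_\ell-t_{\ell-1})3K\log(2KM)\leq 3KM\log(2KM)$ is then immediate from $t_\ell-t_{\ell-1}\leq t_n=M$.

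The main (and essentially only) obstacle is the mild singularity of $f_1''$ at the origin when $\alpha_1=0$; away from it everything reduces to a textbook $C^2$ trapezoidal remainder estimate. Because the bound in the lemma is rather generous (an $M\log M$ term rather than the tight $O(\log M)$ one would get from the sharp estimate), no delicate bookkeeping is needed, and the whole argument packages as one application of the remainder formula plus the explicit calculation on $[0,1]$.
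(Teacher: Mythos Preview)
Your argument is correct, and it takes a genuinely different route from the paper. The paper does not invoke the $C^2$ remainder for the trapezoidal rule at all; instead it bounds the error on each unit subinterval $[j-1,j]$ in the most elementary way, via
\[
\left|f_\ell(x)-\frac{f_\ell(j-1)+f_\ell(j)}{2}\right|\le |f_\ell(x)-f_\ell(j-1)|+\tfrac12|f_\ell(j-1)-f_\ell(j)|\le \tfrac32\int_{j-1}^{j}|f_\ell'(t)|\,dt,
\]
and then estimates $|f_\ell'(t)|=\beta_\ell|1+\log(\alpha_\ell+\beta_\ell t)|$ by $2K\log(2KM)$. Summing over the $(t_\ell-t_{\ell-1})$ unit intervals gives exactly the constant $3K\log(2KM)$ that appears in the statement. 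Your approach, by contrast, exploits $f_\ell''(x)=\beta_\ell^{2}/(\alpha_\ell+\beta_\ell x)$ and the lower bound $\alpha_\ell+\beta_\ell x\ge 1$ away from the origin; this yields a sharper $O(M)$ error rather than $O(M\log M)$, at the price of having to treat the interval $[0,1]$ separately by hand (where $f_1''$ blows up). The paper's first-derivative route is cruder but more uniform and lands precisely on the bound as written; your route is tighter and makes explicit the one place where regularity fails, which is arguably cleaner since the paper's pointwise bound on $|f_1'|$ is itself not valid near $t=0$ and implicitly relies on the same kind of local repair you carry out.
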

\begin{proof}
	Let $S$ be the quantity in the lemma and note that
	\[
	S\leq\sum_{j=t_{\ell-1}+1}^{t_\ell}\int_{j-1}^j\left|f_\ell(x)-\frac{f_\ell(j-1)+f_\ell(j)}{2}\right|\,dx.
	\]
	Now, for $x\in[j-1,j]$ we have
	\[
	|f_\ell(x)-f_\ell(j-1)|\leq\int_{j-1}^j|f_\ell'(t)|\,dt\leq 1+K\log(KM+KM)\leq 2K\log(2KM).
	\]
	We thus have
	\begin{multline*}
	\left|f_\ell(x)-\frac{f_\ell(j-1)+f_\ell(j)}{2}\right|\leq\\\left|f_\ell(x)-f_\ell(j-1)\right|+\left|\frac{f_\ell(j-1)-f_\ell(j)}{2}\right|\leq 3K\log(2KM).
	\end{multline*}
	The lemma follows.
\end{proof}
Indeed, we can use the classical Euler-Maclaurin formula (see for example \cite[Th. 9.26]{Kress}) for estimating the difference between the composite trapezoidal rule and the integral in the cases of $g_\ell$ and $h_\ell$. Indeed we have
\begin{lem}\label{lem:eulermclaurin}
	The following inequality holds for $1\leq \ell\leq n$:
	\begin{align*}
	\left|T_{[t_{\ell-1},t_\ell]}(g_\ell)-\int_{t_\ell-1}^{t_\ell}g_\ell(x)\,dx-\frac{g_\ell'(t_\ell)-g_\ell'(t_{\ell-1})}{12}\right|\leq\frac{C\log M}{M}
	\end{align*}
	for some constant $C>0$.
\end{lem}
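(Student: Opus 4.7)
The plan is to apply the standard Euler--Maclaurin formula, in the form stated in \cite[Th.~9.26]{Kress}, which for a $C^{3}$ function $f$ on an interval $[a,b]$ with integer endpoints reads
$$T_{[a,b]}(f)\;=\;\int_a^b f(x)\,dx+\frac{f'(b)-f'(a)}{12}+R,\qquad |R|\;\le\;C_0\!\int_a^b|f'''(x)|\,dx,$$
with $C_0$ an absolute constant (coming from integration by parts against the Bernoulli polynomial $B_3$). Setting $a=t_{\ell-1}$, $b=t_\ell$, $f=g_\ell$, the lemma is reduced to proving
$$\int_{t_{\ell-1}}^{t_\ell}|g_\ell'''(x)|\,dx\;\le\;\frac{C\log M}{M}.$$

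The key algebraic observation is that $g_\ell=r\cdot v$, where $r(x)=\alpha_\ell+\beta_\ell x$ is linear (so $r''\equiv 0$) and $v(x)=u_\ell(x)\log u_\ell(x)$ with $u_\ell(x)=P_\ell(x)/(N-1)$ for a quadratic polynomial $P_\ell$ (so $u_\ell'''\equiv 0$). A short Leibniz computation then gives the closed form
$$g_\ell'''(x)\;=\;r(x)\!\left[\frac{3P_\ell'P_\ell''}{(N-1)P_\ell}-\frac{(P_\ell')^3}{(N-1)P_\ell^2}\right]+3\beta_\ell\!\left[\frac{P_\ell''(1+\log u_\ell)}{N-1}+\frac{(P_\ell')^2}{(N-1)P_\ell}\right],$$
a sum of five explicit elementary functions whose integrals over $[t_{\ell-1},t_\ell]$ can be written in closed form.

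I would then split the argument into two regimes. For $\ell\ge 2$, the hypothesis $t_1\ge cM$ forces the cumulative count $N_\ell\ge N_2=\Omega(M^{2})$, hence $P_\ell(x)\ge 1+2N_\ell+r(t_{\ell-1})=\Omega(M^{2})$, so $u_\ell$ stays bounded below by a positive constant on the whole piece. Together with $P_\ell'=O(M)$, $P_\ell''=O(1)$ and $|\log u_\ell|=O(1)$, every one of the five summands is pointwise $O(1/M^{2})$, and integrating over an interval of length at most $M$ gives the contribution $O(1/M)$. For $\ell=1$, the identities $\alpha_1=0$, $t_0=0$ collapse $P_1(x)$ to $1+\beta_1 x^{2}$; after the substitution $s=\beta_1 x^{2}$ each summand becomes a genuinely elementary integral of type $\int_0^{\beta_1 t_1^{2}}s^{a}(1+s)^{-b}\,ds$ (with one summand carrying an additional $\log(1+s)$ factor), each of value $O(\log(1+\beta_1 t_1^{2}))=O(\log M)$. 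Combined with the overall prefactor $1/(N-1)=O(1/M^{2})$ and the linear weight $r(x)=\beta_1 x$ contributing another factor $O(M)$ upon integration, one obtains precisely the claimed $O(\log M/M)$.

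The main obstacle is the first piece: near $x=0$ one has $u_1(0)=1/(N-1)$ and hence $\log u_1(0)=-\log(N-1)=O(\log M)$, so no naive pointwise bound on $g_1'''$ is good enough; one really has to use the explicit rational structure of $P_1=1+\beta_1 x^{2}$ to evaluate the integrals. The dominant contribution comes from the term $3\beta_\ell P_\ell''(1+\log u_\ell)/(N-1)$, which after integrating $dx$ from $0$ to $t_1\sim M$ produces exactly the promised factor $\log M/M$, while all the other terms turn out to be strictly smaller.
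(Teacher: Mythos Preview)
Your overall architecture matches the paper's: invoke the Euler--Maclaurin remainder (Lemma~\ref{lem:em} in the Appendix) and control $g_\ell'''$. For $\ell\ge 2$ your pointwise estimate coincides with the paper's argument.

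For $\ell=1$ you take an unnecessary detour. Your assertion that ``no naive pointwise bound on $g_1'''$ is good enough'' is not correct, and the paper exploits precisely this. Using your own closed formula with $r=\beta_1 x$, $P_1=1+\beta_1 x^2$, $P_1'=2\beta_1 x$, $P_1''=2\beta_1$, every summand except the one carrying $\log u_1$ is of the shape
\[
\text{const}\cdot\frac{\beta_1^{k}x^{2m}}{(N-1)(1+\beta_1 x^2)^{m}}\;\le\;\frac{\text{const}}{N-1}=O\!\left(\frac{1}{M^2}\right)
\]
uniformly on $[0,t_1]$, while the remaining summand $3\beta_1 P_1''(1+\log u_1)/(N-1)$ satisfies $|1+\log u_1|\le 1+\log(N-1)=O(\log M)$ and hence is $O(\log M/M^2)$. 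Thus $|g_1'''(x)|\le C\log M/M^2$ \emph{pointwise}, and integrating over an interval of length $\le M$ already gives the claimed $O(\log M/M)$. This is exactly the paper's route (it simply records that ``$g_1'''$ has a simple expression and it is easily verified that $g_1'''\le C\log M/M^2$''). Your explicit-integration plan with the substitution $s=\beta_1 x^2$ would still reach the goal, but it is more work than needed; also your description of those integrals is a bit loose (the Jacobian introduces $s^{-1/2}$, and not every piece is $O(\log M)$ before the prefactor---some are $O(M)$ but are then divided by $N-1$).
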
 
\begin{proof}
 From Lemma \ref{lem:em} it suffices to  prove that $|g_\ell'''(x)|\,dx\leq \frac{C\log M}{M}$ for some constant $C$. Now, $g_\ell=u(x)v(x)w(x)$ where $u$ is a linear mapping, $v$ is a quadratic polynomial and $w=\log v$. The Leibniz rule for the derivative of the product gives
\[
g_\ell'''=uvw'''+6u'v'w'+3u'v''w+3uv''w'+3u'vw''+3uv'w''.
\]
If $\ell=1$ then $g_\ell'''$ has a simple expression and it is easily verified that
\[
g_1'''\leq\frac{C\log M}{M^2}
\]
for some constant $C>0$. For $\ell>1$ note now that $u(x)=\alpha_\ell +\beta_\ell x$ satisfies
\[
|u|\leq CM,\quad |u'|\leq C
\]
where $C$ is some constant. Moreover, $v(x)=1-z_\ell(x)$ satisfies
\[
c\leq |v|<1,\quad |v'|\leq \frac{C}{M},\quad |v''|\leq\frac{C}{M^2}
\]
for some positive constant $c$, not depending on $M$.
Also, $w=\log v$ satisfies
\[
|w|\leq C,\quad|w'|\leq \frac{C}{M},\quad |w''|\leq \frac{C}{M^2}\quad |w'''|\leq \frac{C}{M^3}.
\]
The lemma follows.
\end{proof}

\begin{lem}\label{lem:eulermclaurin2}
	The following inequality holds for $1\leq \ell\leq n$:
	\begin{align*}
	\left|T_{[t_{\ell-1},t_\ell]}(h_\ell)-\int_{t_\ell-1}^{t_\ell}h_\ell(x)\,dx-\frac{h_\ell'(t_\ell)-h_\ell'(t_{\ell-1})}{12}\right|\leq \frac{C}{M}
	\end{align*}
	for some constant $C>0$.
\end{lem}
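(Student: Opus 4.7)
The plan is to mirror the proof of Lemma \ref{lem:eulermclaurin}. By the Euler--Maclaurin estimate (Lemma \ref{lem:em} in the appendix), it suffices to establish that $\int_{t_{\ell-1}}^{t_\ell} |h_\ell'''(x)|\,dx \leq C/M$ for some constant $C>0$.

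I would write $h_\ell = u v w$ where $u(x)=\alpha_\ell+\beta_\ell x$, $v(x)=1+z_\ell(x)$, and $w(x)=\log v(x)$, and apply the Leibniz rule for triple products to obtain
\[
h_\ell''' = u v w''' + 6 u' v' w' + 3 u' v'' w + 3 u v'' w' + 3 u' v w'' + 3 u v' w'' + 3u'v'w
\]
(with any vanishing terms such as $u''$ or $v'''$ omitted). The crucial new feature, which distinguishes this lemma from Lemma \ref{lem:eulermclaurin} and explains the sharper $C/M$ bound, is that $v=1+z_\ell$ is bounded below by an absolute constant not depending on $\ell$ or $M$: by the symmetry $r(x)=r(2M-x)$ and the closed form of Proposition \ref{Propminparallel}, the heights $z_j$ for $1\leq j\leq M$ satisfy $0\leq z_j<1$, hence $v\in[1,2]$. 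Consequently $w$ and its derivatives do not acquire any $\log M$ factor coming from $1/v$.

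Next I would verify the elementary bounds
\begin{gather*}
|u|\leq CM,\qquad |u'|\leq C,\qquad u''=u'''=0,\\
|v|\leq 2,\qquad |v'|\leq C/M,\qquad |v''|\leq C/M^2,\qquad v'''=0,\\
|w|\leq \log 2,\qquad |w'|\leq C/M,\qquad |w''|\leq C/M^2,\qquad |w'''|\leq C/M^3.
\end{gather*}
The bounds on $v'$ and $v''$ come from differentiating \eqref{zjs2}: since $z_\ell$ is a quadratic in $x$ one reads off $z_\ell'(x)=-(2\alpha_\ell+2\beta_\ell x)/(N-1)$ and $z_\ell''(x)=-2\beta_\ell/(N-1)$, and the standing hypotheses $\alpha_\ell\leq AM$, $\beta_\ell\leq A$, $x\leq M$ combined with $N-1=\Theta(M^2)$ (forced by the piecewise linear shape of $r$ with $t_1\geq cM$) give the claimed $1/M$ and $1/M^2$ decay. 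The bounds on $w',w'',w'''$ then follow from the chain rule together with $v\geq 1$.

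Substituting into the Leibniz expansion, every term is at most $C/M^2$: for instance $|uvw'''|\leq CM\cdot 2\cdot C/M^3=C/M^2$, and similarly for the remaining five. Integrating over an interval of length at most $M$ yields $\int_{t_{\ell-1}}^{t_\ell}|h_\ell'''|\,dx\leq C/M$, which is the required estimate. The case $\ell=1$, where $\alpha_1=0$ forces $u(x)=\beta_1 x$ and the expression collapses, can be handled exactly as in Lemma \ref{lem:eulermclaurin}, verifying the pointwise bound $|h_1'''(x)|\leq C/M^2$ directly. The only mild obstacle is the book-keeping through the Leibniz expansion; there are no genuinely new analytic difficulties, and in fact the problem is strictly easier than that of Lemma \ref{lem:eulermclaurin} precisely because $v$ stays away from $0$ without any size restriction on $x$.
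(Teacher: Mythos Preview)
Your approach is exactly what the paper intends: it simply says the proof is ``almost identical to that of Lemma \ref{lem:eulermclaurin}'' and leaves it to the reader, and your key observation that $v=1+z_\ell\in[1,2]$ on $[0,M]$ (so no $\log M$ loss through $w=\log v$) is precisely what distinguishes the two lemmas. One small slip: the term $3u'v'w$ does not belong in the third-derivative expansion (its total order is $2$, not $3$), and indeed it would only be $O(1/M)$, contradicting your claim that every term is $O(1/M^2)$; once you delete it, the six genuine terms $uvw''',\,3uv'w'',\,3uv''w',\,3u'vw'',\,6u'v'w',\,3u'v''w$ all obey the $C/M^2$ pointwise bound, and Lemma \ref{lem:em} (which uses a pointwise sup of $|f'''|$, not an integral) then gives the result.
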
 
\begin{proof}
The proof is almost identical to that of Lemma \ref{lem:eulermclaurin}, so we leave it to the reader.
\end{proof}
We have proved the following.
\begin{thm}\label{maintechnical}
For the Diamond ensemble we have
		\begin{equation*}
		\begin{split}
		&
		\Esp{\theta_{1},...,\theta_{p} \in \left[0,2\pi\right]^{p}}{\E{\log}(\diamond(N))} =-(N-1)\log(4)-2\sum_{\ell=1}^n\int_{t_{{\ell-1}}}^{t_\ell}f_\ell(x)\,dx
		\\&-(N-1)\sum_{\ell=1}^n\left(\int_{t_\ell-1}^{t_\ell}g_\ell(x)\,dx+\frac{g_\ell'(t_\ell)-g_\ell'(t_{\ell-1})}{12}\right)\\&-(N-1)\sum_{\ell=1}^n\left(\int_{t_\ell-1}^{t_\ell}h_\ell(x)\,dx+\frac{h_\ell'(t_\ell)-h_\ell'(t_{\ell-1})}{12}\right)+o(M^2),
		\end{split}
		\end{equation*}
		where as before for $1\leq \ell\leq n$ the functions $f_\ell,g_\ell,h_\ell$ are as in Corollary \ref{cor:otrasuma}.
\end{thm}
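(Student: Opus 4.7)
The plan is to pass from the exact expression in Corollary \ref{cor:otrasuma} to the claimed asymptotic statement by substituting, piece by piece, the corresponding integral (plus, in the cases of $g_\ell$ and $h_\ell$, the first order Euler--Maclaurin boundary correction) for each composite trapezoidal sum. Lemmas \ref{lem:integralfacil}, \ref{lem:eulermclaurin} and \ref{lem:eulermclaurin2} have already been established in the excerpt and produce exactly the needed approximation with quantitative error bounds, so what remains is a bookkeeping argument on orders of magnitude.

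First I would record the scale of $N$. Since $\alpha_\ell\leq AM$ and $\beta_\ell\leq A$ uniformly in $\ell$, we have $r_j\leq 2AM$, and therefore
\[
N=2-r(M)+2\sum_{j=1}^M r_j=O(M^2).
\]
Applying Lemma \ref{lem:integralfacil} to each of the $n$ pieces (with $n$ fixed, not depending on $M$) contributes a total error $O(M\log M)$ in the second summand of Corollary \ref{cor:otrasuma}. Applying Lemma \ref{lem:eulermclaurin} to the third summand yields per-piece error $O(\log M/M)$, which after multiplication by the prefactor $(N-1)=O(M^2)$ and summation over the $n$ pieces gives $O(M\log M)$. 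Lemma \ref{lem:eulermclaurin2} applied analogously to the fourth summand contributes $(N-1)\cdot O(1/M)=O(M)$. All three of these errors are $o(M^2)$, which is exactly the order of error permitted in the statement.

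The only mild subtlety, and the sole actual chore I foresee, is verifying that the constants appearing in the three lemmas can be chosen independent of $\ell$, so that summation over the fixed collection of $n$ pieces really preserves the stated rates. For Lemma \ref{lem:eulermclaurin} the delicate point is the uniform lower bound $|1-z_\ell(x)|\geq c>0$ on the interval $[t_{\ell-1},t_\ell]$, which is needed to bound $w=\log v$ and its derivatives; this follows from the explicit polynomial form \eqref{zjs2} together with the hypothesis $t_1\geq cM$, which keeps the first parallel uniformly bounded away from the north pole (and hence $z_\ell$ away from $1$). The symmetric argument controls $1+z_\ell$ away from $0$ for Lemma \ref{lem:eulermclaurin2}, and the uniform bound on $r_j$ handles Lemma \ref{lem:integralfacil}. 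Once these uniformities are in place, direct substitution into Corollary \ref{cor:otrasuma} produces the formula of the theorem.
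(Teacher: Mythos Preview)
Your proposal is correct and follows exactly the paper's route: the theorem is obtained by substituting the estimates of Lemmas \ref{lem:integralfacil}, \ref{lem:eulermclaurin} and \ref{lem:eulermclaurin2} into the exact formula of Corollary \ref{cor:otrasuma} and checking that the accumulated error is $o(M^2)$, which is precisely what the paper means by ``We have proved the following.'' One small inaccuracy in your side remark: the hypothesis $t_1\geq cM$ does \emph{not} keep $1-z_1(x)$ bounded away from $0$ on $[0,t_1]$ (indeed $z_1(0)=1-\tfrac{1}{N-1}$), which is why the paper's proof of Lemma \ref{lem:eulermclaurin} treats the piece $\ell=1$ separately by a direct computation of $g_1'''$; the uniform lower bound argument you sketch is valid only for $\ell>1$.
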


\subsubsection{Zonal Equal Area Nodes}

In \cite{RSZ94} Rakhmanov et al. define a diameter bounded, equal area partition of $\mathbb{S}^{2}$ consisting on two spherical caps on the south and the north pole and rectiliniar cells located on rings of parallels. 
The resemblance between our model and this model is remarkable, and even if the 
constructions are different, the points obtained seem to be really close.
Actually, both the authors in \cite{RSZ94} and ourselves try to approximate $r_{j}$ as in equation \eqref{eq:ideal} by an integer number.
The theoretical bounds we obtain here for the logarithmic energy are slightly better than the numerical bounds obtained in \cite{dolomites} for the zonal equal area nodes.

An interesting fact is that among all the algorithmically generated point sets, the generalized spiral and zonal equal area points perform the best with respect to the logarithmic energy.
\cite[Proposition 2.3.]{dolomites} claims that the  sequence of zonal equal area configurations is equidistributed and quasi-uniform.
The same kind of result can probably be stated for the Diamond ensemble.

\section{Concrete examples of the Diamond ensemble}\label{examples}

Throughout this section we are going to explore three different examples of the Diamond ensemble.
Each of them is ilustrated with two kinds of figures: a concrete example of points following the model on $\mathbb{S}^{2}$ (figures \ref{fig_sphere_simp_01}, \ref{fig_sphere_elab_01} and \ref{fig_sphere_quasi_01}) and a comparative beetwen the $r_{j}$ that define the model and the $r_{j}$ in equation \eqref{eq:ideal} with $K_{0}=3/\pi$ (figures \ref{fig_simp_02}, \ref{fig_elab_02} and \ref{fig_quasi_02}).
In figures \ref{fig_sphere_simp_01}, \ref{fig_sphere_elab_01} and \ref{fig_sphere_quasi_01} we have used different colors for points obtained from the different linear pieces defining $r(x)$.

\subsection{A simple example}\label{sec:simple}

\begin{figure}[h]
\centering
\includegraphics[width=1\linewidth]{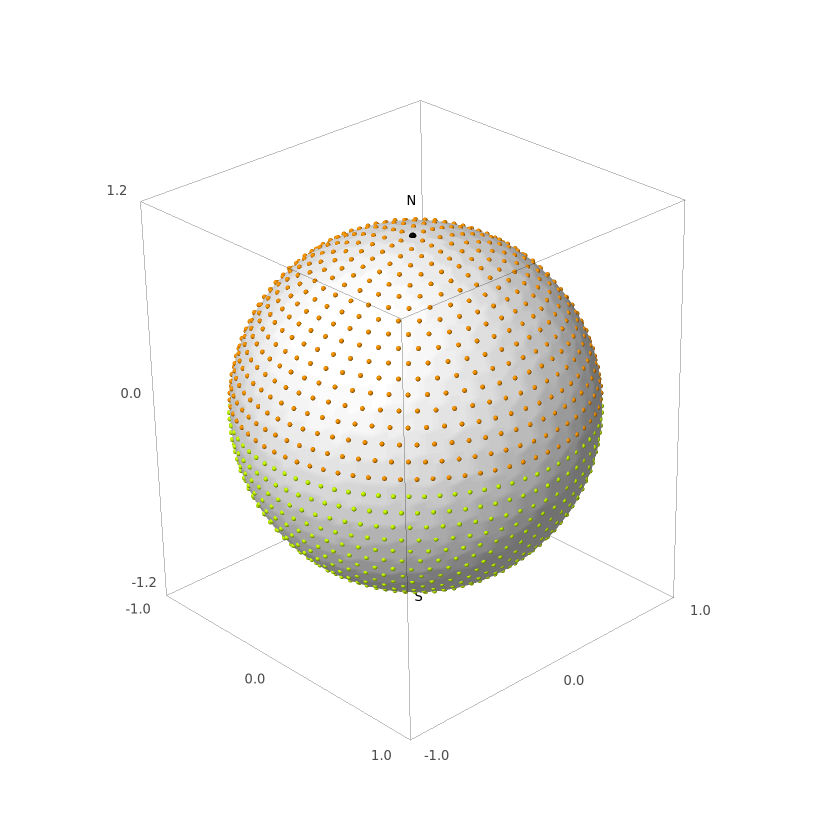}
\caption{A realization of a simple example with $K=4$ and $N=1602$.} \label{fig_sphere_simp_01}
\end{figure}	

\begin{figure}[h]
\centering
\includegraphics[width=1\linewidth]{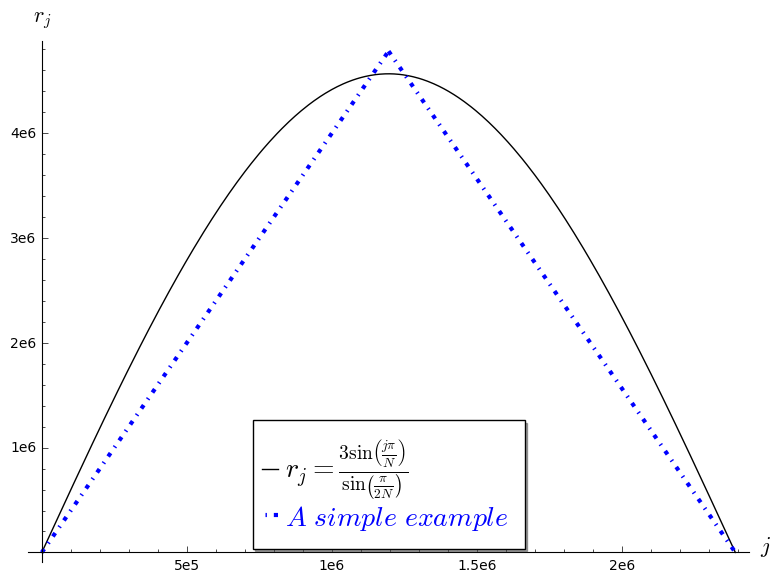}
\caption{Number of points for a simple example with $K=4$.}\label{fig_simp_02}
\end{figure}	

We choose $n=1$, $r_j=Kj$ with $K$ a positive integer for $1\leq j\leq M$. Then, for $l\in\{1,\ldots,M\}$ we have
\[
z_l=1-\frac{1+Kl^2}{N-1}.
\] 
The number of parallels is $2M-1$ and the number of points of the Diamond ensemble is 
\begin{equation*}
N
=
2 + \sum_{j=1}^{p} r_{j}
=
2 -KM+ 2\sum_{j=1}^{M} Kj
=
2 + KM^2.
\end{equation*}

One can then write down the functions in Corollary \ref{cor:otrasuma} getting
\begin{align*}
f(x)=&Kx\log(Kx).\\
g(x)=&Kx\frac{1+Kx^2}{N-1}\log\frac{1+Kx^2}{N-1}.\\
h(x)=&Kx\left(2-\frac{1+Kx^2}{N-1}\right)\log\left(2-\frac{1+Kx^2}{N-1}\right).\\
\end{align*}
Then, the formula in Theorem \ref{maintechnical} reads
\begin{equation*}
\begin{split}
&
\Esp{\theta_{1},...,\theta_{p} \in \left[0,2\pi\right]^{p}}{\E{\log}(\diamond(N))} =-(N-1)\log(4)-2\int_{0}^{M}f(x)\,dx
\\&-(N-1)\left(\int_{0}^{M}g(x)+h(x)\,dx+\frac{g'(M)-g(0)}{12}+\frac{h'(M)-h'(0)}{12}\right)+o(M^2).
\end{split}
\end{equation*}
All these integrals and derivatives can be computed, obtaining the following result.

\begin{thm}\label{thm:mainsimple}
	The expected value of the logarithmic energy of the Diamond ensemble in this section is
	\begin{multline*}
	\Esp{\theta_{1},...,\theta_{p} \in \left[0,2\pi\right]^{p}}{\E{\log}(\diamond (N))}
	=W_{\log}(\S2)N^2-\frac{1}{2}N\log N\\+N\left(\frac{\log 2}{6}K-\frac12+\log 2-\frac{\log K}{2}\right)+o(N).
	\end{multline*}
	In particular, if $K=4$ we have
	\begin{equation*}
	\Esp{\theta_{1},...,\theta_{p} \in \left[0,2\pi\right]^{p}}{\E{\log}(\diamond (N))}
	=W_{\log}(\S2)-\frac{1}{2}N\log N+N\left(\frac{2\log 2}{3}-\frac12\right)+o(N).
	\end{equation*}
\end{thm}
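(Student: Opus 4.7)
The plan is to specialize Theorem \ref{maintechnical} to the single-piece case $n=1$, $\alpha_1=0$, $\beta_1=K$, $t_0=0$, $t_1=M$, and then reduce the resulting expression to a closed form that can be expanded asymptotically as $N = 2 + KM^2 \to \infty$. First I would substitute $r(x) = Kx$ into Corollary \ref{cor:otrasuma}, using the identity $1-z(x) = (1+Kx^2)/(N-1)$ that follows directly from \eqref{zjs2}. With the shorthand $u(x) = (1+Kx^2)/(N-1)$ this produces the three integrands $f(x) = Kx\log(Kx)$, $g(x) = Kx\,u\log u$, $h(x) = Kx(2-u)\log(2-u)$ written in the statement.

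The integral of $f$ is elementary. For $g$ and $h$, the decisive observation is that the change of variable $u = u(x)$ turns $Kx\,dx$ into $\tfrac{N-1}{2}\,du$, while $u$ sweeps precisely from $1/(N-1)$ at $x=0$ to $1$ at $x=M$. Hence both integrals collapse to one-dimensional antiderivatives: $\int u\log u\,du = \tfrac{u^2}{2}\log u - \tfrac{u^2}{4}$ and its mirror image around $u=1$. Next I would compute the four boundary derivatives $g'(0), g'(M), h'(0), h'(M)$. These simplify pleasantly: at $x=M$ we have $u=1$, so the logarithmic factors vanish and only the ``product-rule'' pieces survive, giving $g'(M) = 2K(N-2)/(N-1) = -h'(M)$; at $x=0$ we have $u'=0$ so only the first summand survives, giving $g'(0) = -K\log(N-1)/(N-1)$ and $h'(0) = 2K\log 2 + O(1/N)$. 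The large $\pm 2K(N-2)/(N-1)$ pieces cancel, and the sum $g'(M)-g'(0)+h'(M)-h'(0)$ reduces to $-2K\log 2 + O((\log N)/N)$.

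Finally I would perform the asymptotic expansion using $KM^2 = N-2$ and $\log(N-2) = \log N + O(1/N)$, collecting separately the $N^2$, $N\log N$ and $N$ coefficients from the five summands $-(N-1)\log 4$, $-2\int f$, $-(N-1)\int g$, $-(N-1)\int h$ and the boundary correction $-(N-1)[g'(M)-g'(0)+h'(M)-h'(0)]/12$. The $N^2$ coefficient arises solely from $-(N-1)(\int g + \int h)$ and equals $\tfrac12 - \log 2 = W_{\log}(\S2)$; the $N\log N$ coefficient arises solely from $-2\int f$ and equals $-\tfrac12$; the $N$-coefficient is the combination of a piece $-2\log 2$ from $-(N-1)\log 4$, pieces $-\tfrac{\log K}{2} + \tfrac12$ from $-2\int f$, a $-\tfrac14$ from $-(N-1)\int g$, a $3\log 2 - \tfrac34$ from $-(N-1)\int h$, and a $\tfrac{K\log 2}{6}$ from the boundary correction, summing to $\tfrac{K\log 2}{6} - \tfrac12 + \log 2 - \tfrac{\log K}{2}$. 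Specialization to $K=4$ collapses $\log 2 - \tfrac{\log 4}{2}$ to $0$ and yields the stated corollary.

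The main obstacle is the bookkeeping of subleading terms. Three of the five summands contribute $\Theta(N^2)$ quantities that must combine to give precisely $W_{\log}(\S2)N^2$, and similarly several $\Theta(N)$ terms must be tracked with their correct signs; in particular one must remember that the expansion $\log(N-2) = \log N - 2/N + O(1/N^2)$ yields an $O(1)$ contribution once multiplied by $N/2$, which does affect $c_\diamond$. A secondary but necessary point is to check that all $O((\log N)/N)$ and $O(1/N)$ errors from the integrals and the boundary contributions remain $o(N)$ after multiplication by $N-1$; since they originate from a finite number of $O(\log N)$ terms this is automatic, but worth recording to certify the final $o(N)$ remainder in the statement.
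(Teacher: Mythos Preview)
Your proposal is correct and follows exactly the route the paper indicates (``All these integrals and derivatives can be computed''): specialize Theorem~\ref{maintechnical} to $n=1$, evaluate $\int f$, $\int g$, $\int h$ and the four boundary derivatives, and expand. Your change of variable $u=(1+Kx^2)/(N-1)$ with $Kx\,dx=\tfrac{N-1}{2}\,du$ is the clean way to handle $\int g$ and $\int h$, and the listed contributions to the $N^2$, $N\log N$ and $N$ coefficients all check out.

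One small slip in your closing commentary: the correction $\log(N-2)=\log N-2/N+O(1/N^2)$, once multiplied by $(N-2)/2$, contributes only an $O(1)$ term to the energy, hence lands in $o(N)$ and does \emph{not} affect $c_\diamond$. Your itemized $N$-coefficients are nonetheless correct, so this is a harmless misstatement rather than a gap.
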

Note that $\frac{2\log 2}{3}-\frac12=- 0.037901879626703\ldots$ 
Using this simple example we are thus approximately $0.0177$ far from the valued conjectured in \eqref{conjetura}.


\subsubsection{Octahedral configurations of points}
In \cite{HOLHOS20141092} an area preserving map from the unit sphere to the regular octahedron is defined.	
Considering some hierarchical triangular grids on the facets of the octahedron a grid can be mapped into the sphere obtaining two different sets of points: those coming from the vertex of the grid $\Omega_{N}$ and the centers of the triangles $\Lambda_{N}$.

$\Omega_{N}$ consists on $4M^ 2 + 2$ points in the sphere that are a concrete example (with fixed angles) of our simple example.
In the paper, the authors give some numerical simulations for the logarithmic energy of this set of points that are confirmed by Theorem \ref{thm:mainsimple}.
Also in \cite[Figure 2.2]{dolomites} new numerical simulations for the same set are done obtaining a bound which is very similar to the one we prove here.
	
\subsection{A more elaborated example}

\begin{figure}[h]
\centering
\includegraphics[width=1\linewidth]{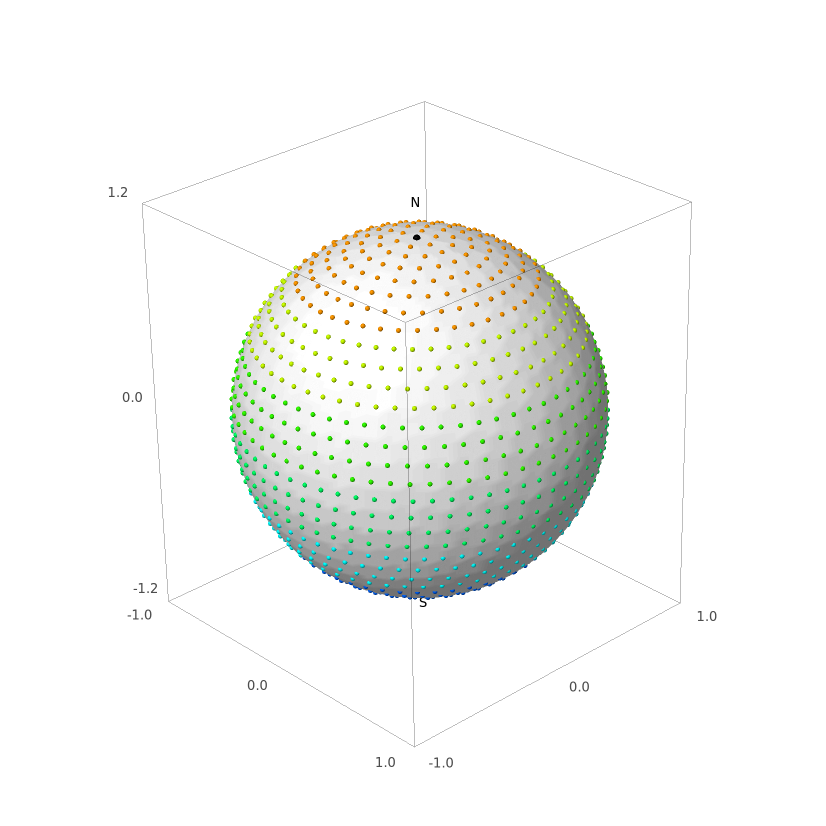}
\caption{A realization of a more elaborated example with $N=1314$.} \label{fig_sphere_elab_01}
\end{figure}	

\begin{figure}[h]
\centering
\includegraphics[width=1\linewidth]{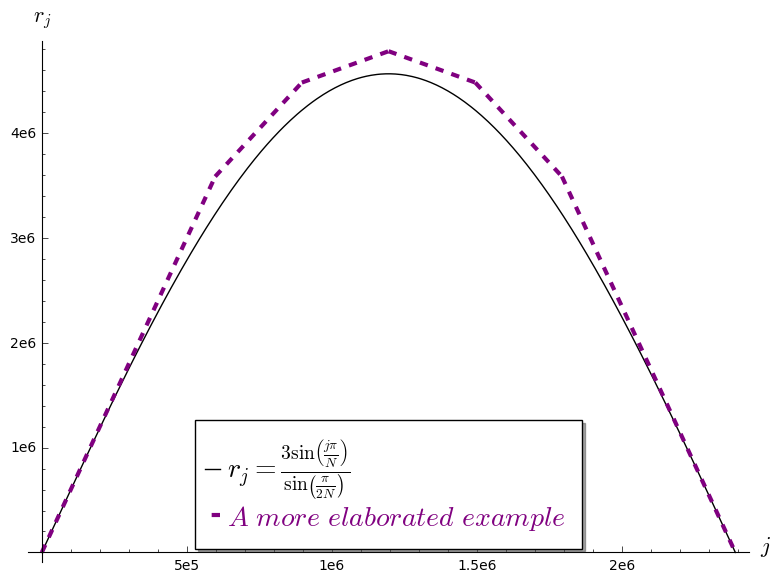}
\caption{Number of points for a more elaborated example.} \label{fig_elab_02}
\end{figure}	

The following choice of $r_j$ produces much better results. Let $p=2M-1$ where $M=4m$ with $m$ a positive integer. Let $n=3$ and let $r_j=r(j)$ where
\[
r(x)=
\begin{cases}
6x&0\leq x\leq 2m
\\6m+3x&2m\leq x\leq 3m
\\12m+x&3m\leq x\leq 4m
\\20m-x&4m\leq x\leq 5m
\\30m-3x&5m\leq x\leq 6m
\\48m-6x&6m\leq x\leq 8m
\end{cases}
\]
that satisfies $r(x)=r(p+1-x)=r(8m-x)$. Let $z_j=z(j)$ where $z(x)$ is defined by \eqref{zjs2}, that is,

\[
z(x)=
\begin{cases}
\frac{82m^2-6x^2}{82m^2+1}&0\leq x\leq 2m
\\\frac{94m^2-12mx-3x^2}{82m^2+1}&2m\leq x\leq 3m
\\\frac{112m^2-24mx-x^2}{82m^2+1}&3m\leq x\leq 4m
\\\frac{144m^2-40mx+x^2}{82m^2+1}&4m\leq x\leq 5m
\\\frac{194m^2-60mx+3x^2}{82m^2+1}&5m\leq x\leq 6m
\\\frac{302m^2-96mx+6x^2}{82m^2+1}&6m\leq x\leq 8m
\end{cases}
\]
We moreover have $N=82m^2+2$. Again, all the integrals and derivatives in Theorem \ref{maintechnical} can be computed, although this time the computer algebra package {\tt Maxima} has been used, getting the following result.

\begin{thm}\label{thm:main}

The expected value of the logarithmic energy of the Diamond ensemble in this section is
	\begin{equation*}
	\Esp{\theta_{1},...,\theta_{p} \in \left[0,2\pi\right]^{p}}{\E{\log}(\diamond (N))}
	=W_{\log}(\S{2})N^2-\frac{N}{2}\log N+cN+o(N),
	\end{equation*}
	where $c=- 0.048033870622806\ldots$ satisfies
	\begin{multline*}
	492c=- 113 \log 113 - 982 \log82 - 210 \log70 - 51 \log51\\
	+ 1638 \log41 + 900 \log15 - 36 \log12 - 1536 \log8 \\+ 144 \log6
	- 492 \log4 + 1968 \log2 - 246.
	\end{multline*}
\end{thm}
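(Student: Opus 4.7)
The plan is to apply Theorem \ref{maintechnical} directly, with $n=3$ pieces on $[0,M]$ described by $(\alpha_1,\beta_1)=(0,6)$ on $[0,2m]$, $(\alpha_2,\beta_2)=(6m,3)$ on $[2m,3m]$, and $(\alpha_3,\beta_3)=(12m,1)$ on $[3m,4m]$. The symmetry $r(x)=r(8m-x)=r(2M-x)$ shows that the hypothesis of Theorem \ref{cor:nuevasuma} is satisfied, so Theorem \ref{maintechnical} expresses the expected logarithmic energy, up to an error of $o(N)$, as a sum of six definite integrals $\int f_\ell, \int g_\ell, \int h_\ell$ and six endpoint derivative contributions, plus the elementary term $-(N-1)\log 4$.

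For each piece $\ell$, one reads off from the explicit formulas for $z(x)$ that $1-z_\ell(x)=Q_\ell^-(x)/(N-1)$ and $1+z_\ell(x)=Q_\ell^+(x)/(N-1)$, where $Q_\ell^\pm$ are explicit integer-coefficient quadratics in $x$ with $m$-dependent coefficients. The integrands $(N-1)g_\ell(x)=(\alpha_\ell+\beta_\ell x)\,Q_\ell^-(x)\,[\log Q_\ell^-(x)-\log(N-1)]$ and the analogue for $h_\ell$ are therefore cubic polynomials times logarithms of quadratics, whose antiderivatives are elementary by integration by parts, the remaining integrals $\int P(x)Q'(x)/Q(x)\,dx$ being handled by polynomial division. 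The $f_\ell$ integrals and the six boundary quantities $g_\ell'(t_\ell)-g_\ell'(t_{\ell-1})$ and $h_\ell'(t_\ell)-h_\ell'(t_{\ell-1})$ are likewise explicit rational expressions involving logarithms of integer-coefficient quantities in $m$.

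The final step substitutes $N-1=82m^2+1$ and expands everything asymptotically as $m\to\infty$. The leading $N^2$ coefficient must collapse to $W_{\log}(\S{2})=\tfrac12-\log 2$, which is a strong internal check since this coefficient is forced to equal the continuous energy; the $\log(N-1)$ factors coming from $g_\ell$ and $h_\ell$ combine with the $\log r_\ell$ contributions of $f_\ell$ to produce exactly $-\frac{1}{2}N\log N$; and the remaining $O(N)$ contributions collect into $cN$. The specific integers $113,\,82,\,70,\,51,\,41,\,15,\,12,\,8,\,6,\,4,\,2$ appearing in the closed form for $c$ arise as the values of the quadratics $Q_\ell^\pm$ at the breakpoints (for example $Q_3^-(4m)=1+82m^2$, $Q_2^+(3m)=1+113m^2$, $Q_2^+(2m)=1+140m^2=1+4\cdot 35 m^2$, $Q_1^+(0)=1+164m^2=1+4\cdot 41m^2$) together with the values of $r$ at the breakpoints ($r(2m)=12m$, $r(3m)=15m$, $r(4m)=16m$).

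The only genuine obstacle is bookkeeping: dozens of elementary contributions must be added and reorganized before they collapse to the compact expression displayed, which is why the computation is carried out in the computer algebra system \texttt{Maxima}. Beyond the machinery of Theorem \ref{maintechnical} itself and the usual care in tracking which terms survive in the $N^2$, $N\log N$, and $N$ asymptotic slots, no additional conceptual ingredient is needed.
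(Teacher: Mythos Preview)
Your proposal is correct and follows exactly the paper's own approach: the paper simply states that ``all the integrals and derivatives in Theorem~\ref{maintechnical} can be computed, although this time the computer algebra package {\tt Maxima} has been used,'' and you have spelled out precisely this computation, including a helpful identification of where the specific integers in the closed form for $c$ originate. There is no additional idea in the paper beyond what you describe.
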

Using this more elaborated example we are thus approximately $0.0076$ far from the value conjectured in \eqref{conjetura}.
	
	
\subsection{A quasioptimal Diamond example}\label{sec:qode}

\begin{figure}[h]
\centering
\includegraphics[width=1\linewidth]{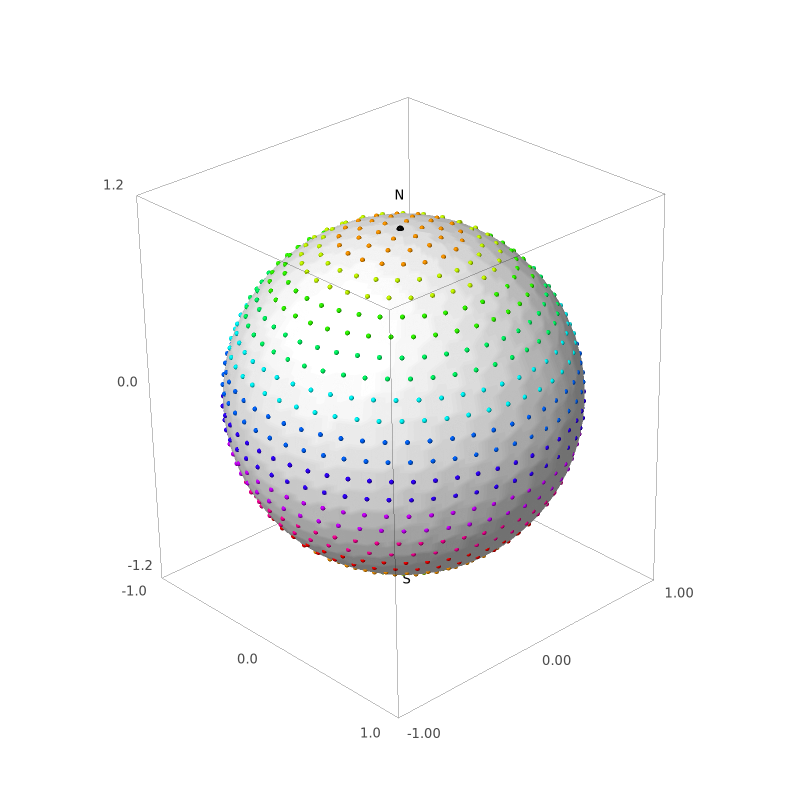}
\caption{A realization of a quasioptimal example with $N=958$.} \label{fig_sphere_quasi_01}
\end{figure}	

\begin{figure}[h]
\centering
\includegraphics[width=1\linewidth]{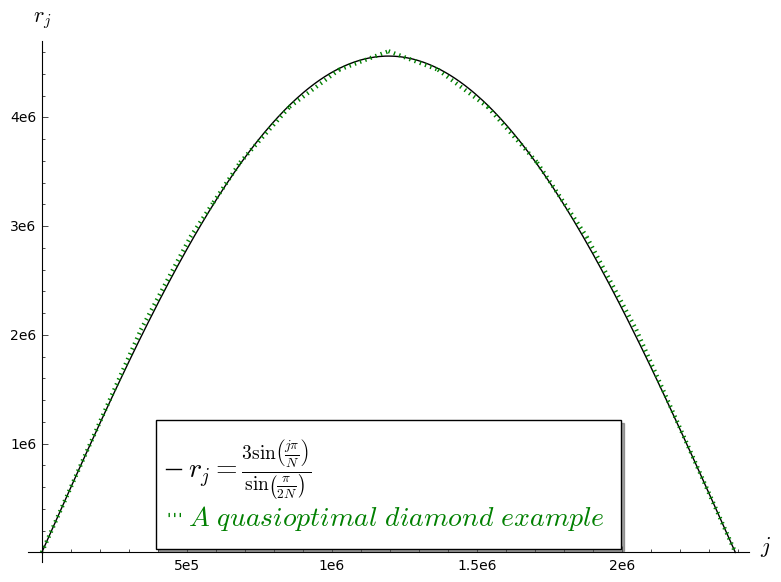}
\caption{A realization of a quasioptimal example.}\label{fig_quasi_02} 
\end{figure}	

We have made a number of tries with different choices of the parameters for the Diamond ensemble. The best one (i.e. the one with minimal logarithmic energy) that we have found is the following one: let $M=7m$ with $m$ a positive integer, let $p=2M-1$ and let
\[
r(x)=
\begin{cases}
6x&0\leq x\leq 2m
\\2m+5x&2m\leq x\leq 3m
\\5m+4x&3m\leq x\leq 4m
\\9m+3x&4m\leq x\leq 5m
\\14m+2x&5m\leq x\leq 6m
\\20m+x&6m\leq x\leq 7m
\\34m-x&7m\leq x\leq 8m
\\42m-2x&8m\leq x\leq 9m
\\51m-3x&9m\leq x\leq 10m
\\61m-4x&10m\leq x\leq 11m
\\72m-5x&11m\leq x\leq 12m
\\84m-6x&12m\leq x\leq 14m=p+1
\end{cases}
\]
that satisfies $r(x)=r(p+1-x)=r(14m-x)$. Let $z_j=z(j)$ where $z(x)$ is defined by \eqref{zjs2}, that is,
\[
z(x)=
\frac{1}{239m^2+1}\times\begin{cases}
{239m^2-6x^2}&0\leq x\leq 2m
\\{243m^2-4mx-5x^2}&2m\leq x\leq 3m
\\{252m^2-10mx-4x^2}&3m\leq x\leq 4m
\\{268m^2-18mx-3x^2}&4m\leq x\leq 5m
\\{293m^2-28mx-2x^2}&5m\leq x\leq 6m
\\{329m^2-40mx-x^2}&6m\leq x\leq 7m
\\{427m^2-68mx+x^2}&7m\leq x\leq 8m
\\{491m^2-84mx+2x^2}&8m\leq x\leq 9m
\\{572m^2-102mx+3x^2}&9m\leq x\leq 10m
\\{672m^2-122mx+4x^2}&10m\leq x\leq 11m
\\{793m^2-144mx+5x^2}&11m\leq x\leq 12m
\\{937m^2-168mx+6x^2}&12m\leq x\leq 14m=p+1
\end{cases}
\]
We moreover have $N=239m^2+2$. Again, all these integrals and derivatives have been computed by the computer algebra package {\tt Maxima}, obtaining Theorem \ref{ThmPpal2}.

\begin{figure}[h]
\centering
\includegraphics[scale=0.7]{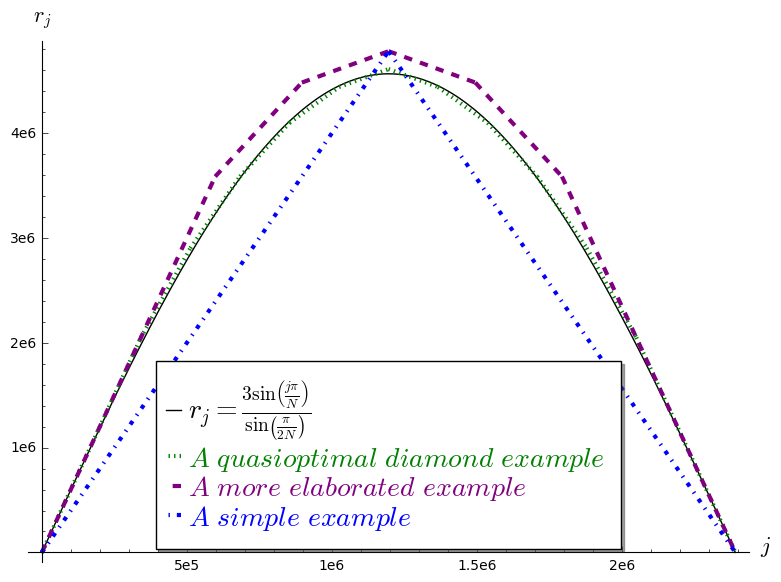}
\label{fig_r_j_simp_elab_quasi}
\caption{Comparison of the number of points in each parallel for the different models for $2M \approx 2.500.000$ parallels.}
\end{figure}


\section{Proofs of the main results}\label{section_proofs}

\subsection{Proof of Proposition \ref{PropAveEner}}

In order to prove Proposition \ref{PropAveEner}, we will need the following Lemma.

\begin{lem}\cite[Formula 4.224]{zwillinger2014table} \label{LemInt1}
	The equation
	\begin{equation*}
	\int_{0}^{\pi} \log(a + b\cos(\theta)) d\theta
	=
	\pi\log \left( \frac{a + \sqrt{a^{2} - b^{2}}}{2} \right)
	\end{equation*}
	is satisfied if $a \geq |b| > 0$.
\end{lem}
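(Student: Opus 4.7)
The plan is to prove the identity by differentiating under the integral sign with respect to $a$ (the parameter multiplying $\cos\theta$'s coefficient frees us from $b$), reducing the problem to a classical trigonometric integral and then integrating back, fixing the constant of integration by an asymptotic argument.

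First I would fix $b$ with $0<|b|<a$ and set $I(a)=\int_0^\pi\log(a+b\cos\theta)\,d\theta$. Since $a+b\cos\theta\geq a-|b|>0$ on $[0,\pi]$, the integrand and its $a$-derivative are bounded and jointly continuous, so differentiation under the integral sign is justified, giving
\begin{equation*}
I'(a)=\int_0^\pi\frac{d\theta}{a+b\cos\theta}.
\end{equation*}
This is a standard integral which, via the Weierstrass substitution $t=\tan(\theta/2)$, becomes $\int_0^\infty \frac{2\,dt}{(a+b)+(a-b)t^2}$; evaluating this yields $I'(a)=\pi/\sqrt{a^2-b^2}$.

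Next I would integrate this identity back in $a$. Since $\frac{d}{da}\log\!\bigl(a+\sqrt{a^2-b^2}\bigr)=1/\sqrt{a^2-b^2}$, there exists a constant $C=C(b)$ such that
\begin{equation*}
I(a)=\pi\log\!\bigl(a+\sqrt{a^2-b^2}\bigr)+C(b)
\end{equation*}
for every $a>|b|$. To pin down $C(b)$ I would take $a\to\infty$. On the integral side, $\log(a+b\cos\theta)=\log a+\log(1+(b/a)\cos\theta)$, and dominated convergence gives $I(a)=\pi\log a+o(1)$. On the closed-form side, $a+\sqrt{a^2-b^2}=2a+O(1/a)$, so $\pi\log(a+\sqrt{a^2-b^2})=\pi\log 2+\pi\log a+o(1)$. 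Matching the two asymptotics forces $C(b)=-\pi\log 2$, which gives exactly the stated formula for $a>|b|>0$.

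Finally I would treat the boundary case $a=|b|>0$ separately. Both sides of the claimed identity are continuous in $a$ on $[|b|,\infty)$ (the integral side by dominated convergence, using the integrability of $\log|1+\cos\theta|$ on $[0,\pi]$), so letting $a\downarrow|b|$ extends the equality to $a=|b|$, where the right-hand side collapses to $\pi\log(a/2)$. There is no real obstacle here beyond bookkeeping; the only slightly delicate point is justifying differentiation under the integral sign and the $a\to\infty$ limit, both of which are routine applications of dominated convergence once one notes the uniform lower bound $a+b\cos\theta\geq a-|b|>0$ on any compact subinterval of $(|b|,\infty)$.
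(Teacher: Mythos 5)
Your argument is correct in all its steps: differentiation under the integral sign is justified by the uniform bound $a+b\cos\theta\ge a-|b|>0$ on compact subsets of $(|b|,\infty)$, the Weierstrass substitution does give $I'(a)=\pi/\sqrt{a^2-b^2}$ (both $a+b$ and $a-b$ are positive there, regardless of the sign of $b$), the antiderivative $\pi\log\bigl(a+\sqrt{a^2-b^2}\bigr)$ is right, the constant $-\pi\log 2$ is correctly pinned down by the $a\to\infty$ expansion, and the extension to $a=|b|$ by dominated (or monotone) convergence works because $\log(|b|+b\cos\theta)$ has only an integrable logarithmic singularity. The comparison with the paper is somewhat vacuous here: the paper offers no proof of this lemma at all, quoting it verbatim from Gradshteyn--Ryzhik (Formula 4.224), so what you have produced is a self-contained derivation of the table entry rather than a variant of an internal argument. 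Two small remarks: the application in the paper is to $a=1-z_iz_j$, $b=-\sqrt{1-z_i^2}\sqrt{1-z_j^2}$ with $z_i\neq z_j$, where $a^2-b^2=(z_i-z_j)^2>0$, so only the strict case $a>|b|$ is actually used and your boundary discussion is a bonus; and a slightly shorter route to the same formula, worth knowing, is to write $a+b\cos\theta=c\,|1-re^{i\theta}|^2$ with $c=\tfrac{a+\sqrt{a^2-b^2}}{2}$ and $r=-b/(2c)$, $|r|\le 1$, and invoke the classical mean-value identity $\int_0^{2\pi}\log|1-re^{i\theta}|\,d\theta=0$, which gives the result (including the case $|r|=1$, i.e.\ $a=|b|$) without any limiting argument.
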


\begin{dem}
Note that
	\begin{equation*}
	\begin{split}
	& ||x-y||
	= 
	\left|\left| 
	\left( \sqrt{1 - z_{i}^{2}} \cos \theta_{i}, \sqrt{1 - z_{i}^{2}} \sin \theta_{i}, z_{i} \right) 
	-
	\left( \sqrt{1 - z_{j}^{2}} \cos \theta_{j}, \sqrt{1 - z_{j}^{2}} \sin \theta_{j}, z_{j} \right)
	\right|\right|
	\\
	& = 
	\sqrt{2}\sqrt{1 -z_{i}z_{j} - \sqrt{1 - z_{i}^{2}} \sqrt{1 - z_{j}^{2}} \cos(\theta_{i} - \theta_{j})}.
	\end{split}
	\end{equation*}
	
	\noindent We compute then
	\begin{equation*}
	\begin{split}
	& \Esp{\theta_{i},\theta_{2}}{-\log\left( \|x-y\| \right)}
	\\
	& = 
	\frac{1}{4\pi^{2}}
	\int_{0}^{2\pi} \int_{0}^{2\pi}
	- \log \left( 
	\sqrt{2}\sqrt{1 -z_{i}z_{j} - \sqrt{1 - z_{i}^{2}} \sqrt{1 - z_{j}^{2}} \cos(\theta_{i} - \theta_{j})}
	\right) d\theta_{i} d\theta_{j}
	\\
	\end{split}
	\end{equation*}
	\begin{equation*}
	\begin{split}
	& = 
	\frac{-\log(2)}{2}
	- \frac{1}{8\pi^{2}}
	\int_{0}^{2\pi} \int_{0}^{2\pi}
	\log \left(
	1 -z_{i}z_{j} - \sqrt{1 - z_{i}^{2}} \sqrt{1 - z_{j}^{2}} \cos(\theta_{i} - \theta_{j})
	\right) d\theta_{i} d\theta_{j}
	\\
	& = 
	\frac{-\log(2)}{2}
	- \frac{1}{4\pi}
	\int_{0}^{2\pi} 
	\log \left(
	1 -z_{i}z_{j} - \sqrt{1 - z_{i}^{2}} \sqrt{1 - z_{j}^{2}} \cos(\theta)
	\right) d\theta
	\\
	& = 
	\frac{-\log(2)}{2}
	- \frac{1}{2\pi}
	\int_{0}^{\pi} 
	\log \left(
	1 -z_{i}z_{j} - \sqrt{1 - z_{i}^{2}} \sqrt{1 - z_{j}^{2}} \cos(\theta)
	\right) d\theta.
	\end{split}
	\end{equation*}
	
	\noindent From Lemma \ref{LemInt1} with $a = 1 -z_{i}z_{j}$, $b = - \sqrt{1 - z_{i}^{2}} \sqrt{1 - z_{j}^{2}}$ and $z_{i} \neq z_{j}$ we have that
	\begin{equation*}
	\begin{split}
	\Esp{\theta_{i},\theta_{2}}{-\log\left( \|x-y\| \right)} 
	& = 
	\frac{-\log(2)}{2}
	- \frac{1}{2\pi}
	\pi
	\log \left( \frac{1 - z_{i}z_{j} + |z_{i} - z_{j}|}{2} \right)
	\\
	& = 
	\frac{-\log(2)}{2}
	- \frac{1}{2}
	\left[
	\log \left( 1 - z_{i}z_{j} + |z_{i} - z_{j}| \right)
	-\log(2)
	\right]
	\\
	& = 
	- \frac{\log \left( 1 - z_{i}z_{j} + |z_{i} - z_{j}| \right)}{2}.
	\end{split}
	\end{equation*}
	
	$\square$

\end{dem}

\subsection{Proof of Proposition \ref{PropEnergyOmegaGeneral}}

In order to compute de logarithmic energy associated to $\Omega(p,r_{j},z_{j})$, we have to sum the following quantities:

\begin{itemize}
	\item $A$: the energy between each point $x_{j}^{i}$, $1\leq j \leq p$ and $1\leq i \leq r_{j}$ and the north and the south pole, counted twice and the energy from the south to the north pole, again counted twice.
	\item $B$: the energy of the scaled roots of unity for every parallel $1 \leq j \leq p$.
	\item $C$: the energy between the points of every pair of parallels, as in Corollary \ref{CorAveEner2}.
\end{itemize}

\subsubsection{Computation of quantity $A$}

Note that
\begin{equation*}
\begin{split}
\|(0,0,1)-x_j^i\|=&
\sqrt{2}\sqrt{1 - z_{j}},\\
\|(0,0,-1)-x_j^i\|=&
\sqrt{2}\sqrt{1 + z_{j}}.
\end{split}
\end{equation*} 

Quantity $A$ thus equals
\begin{equation}\label{QuantityA}
\begin{split}
A=&
-2\log(2)
-\sum_{j=1}^{p} r_{j} \left( \log(4) + \log \left(  1 - z_{j}^{2}  \right) 
\right).
\end{split}
\end{equation}

\subsubsection{Computation of quantity $B$}

We will use the following results from \cite[Pg. 3]{BLMS:BLMS0621}: the logarithmic energy  associated to $N$ roots of unity in the unit circumference is $-N\log N$. As a trivial consequence, the logarithmic energy associated to $N$ points which are equidistributed in a circumference of radius $R$ is $-N\log N-N(N-1)\log R$.
	
Since the parallel at height $z_j$ is a circumference of radius $\sqrt{1-z_j^2}$, quantity $B$ equals

\begin{equation}\label{QuantityB}
\begin{split}
B& =
-\sum_{j=1}^{p}
r_{j}\log r_{j} + \frac{r_{j}(r_{j} - 1)}{2} \log (1 - z_{j}^{2}).
\end{split}
\end{equation}

\subsubsection{Computation of quantity $C$}

This has been done in Corollary \ref{CorAveEner2}:

\begin{equation}\label{QuantityC}
C= \sum_{k,j=1; k\neq j}^{p}  - r_{j}r_{k}\frac{\log \left( 1 - z_{j}z_{k} + |z_{j} - z_{k }| \right)}{2}.
\end{equation}

In order to compute the logarithmic energy associated to the set $\Omega(p,r_{j},z_{j})$ it only rest to sum the quantities \eqref{QuantityA}, \eqref{QuantityB} and \eqref{QuantityC}.

\begin{equation*}
\begin{split}
& 
\Esp{\theta_{1},...\theta_{p} \in \left[0,2\pi\right]^{p}}{\E{\log}(\Omega(p,r_{j},z_{j}))}
= 
-2\log(2)
-\sum_{j=1}^{p} r_{j} \left( \log(4) + \log \left(  1 - z_{j}^{2}  \right) 
\right) 
\\
&  -
\sum_{j=1}^{p} \left(r_{j}\log r_{j} + \frac{r_{j}(r_{j} - 1)}{2} \log (1 - z_{j}^{2})\right)
-
\sum_{j=1}^{p} \sum_{k\neq j}  r_{j}r_{k}\frac{\log \left( 1 - z_{j}z_{k} + |z_{j} - z_{k }| \right)}{2}
\\
= & 
-2\log(2)
-\sum_{j=1}^{p}
\left[
r_{j}\log(4)
+ r_{j}\log(1 - z_{j}^{2})
+ r_{j}\log r_{j}
+ \frac{r_{j}^{2}}{2} \log (1- z_{j}^{2})
\right.
\\
& \left.
- \frac{r_{j}}{2} \log (1- z_{j}^{2})
+ \sum_{k\neq j}  r_{j}r_{k}\frac{\log \left( 1 - z_{j}z_{k} + |z_{j} - z_{k }| \right)}{2}
\right]
\\
= &
-2\log(2)
-\sum_{j=1}^{p}
\left[
r_{j}\log(4)
+ \frac{r_{j}}{2}  \log(1 - z_{j}^{2})
+ r_{j}\log r_{j}
+ \sum_{k=1}^{p}  r_{j}r_{k}\frac{\log \left( 1 - z_{j}z_{k} + |z_{j} - z_{k }| \right)}{2}
\right].
\end{split}
\end{equation*}

$\square$


\subsection{Proof of Proposition \ref{Propminparallel}}

We derivate the formula from Proposition \ref{PropEnergyOmegaGeneral} for $z_{l}$ obtaining:

\begin{equation*}
\begin{split}
&
\frac{\partial \Esp{\theta_{1},...,\theta_{p} \in \left[0,2\pi\right]^{p}}{\E{\log}(\Omega(p,r_{j},z_{j}))}
}{\partial z_{l}}
\\
& =
\frac{\partial}{\partial z_{l}}
\left(
-\sum_{j=1}^{p}
\frac{r_{j}}{2}  \log(1 - z_{j}^{2})
-\sum_{j=1}^{p}\sum_{k=1}^{p}  
r_{j}r_{k}\frac{\log \left( 1 - z_{j}z_{k} + |z_{j} - z_{k }| \right)}{2}
\right) \\
& =
\frac{z_{l}r_{l}}{1-z_{l}^{2}}
+ \frac{z_{l}r_{l}^{2}}{1-z_{l}^{2}}
+ \sum_{j=1}^{l-1} r_{j}r_{l} \frac{1+z_{j}}{(1-z_{l})(1+z_{j})}
- \sum_{j=l+1}^{p} r_{j}r_{l} \frac{1-z_{j}}{(1+z_{l})(1-z_{j})}\\
& =
\frac{z_{l}r_{l}(1+r_{l})}{1-z_{l}^{2}}
+ \sum_{j=1}^{l-1}  \frac{r_{j}r_{l}}{1-z_{l}}
- \sum_{j=l+1}^{p}  \frac{r_{j}r_{l}}{1+z_{l}} \\
& =
\frac{r_{l}}{1-z_{l}^{2}}
\left(
(1+r_{l})z_{l}
+ (1+z_{l})\sum_{j=1}^{l-1}  r_{j}
- (1-z_{l})\sum_{j=l+1}^{p}  r_{j}
\right) \\
& =
\frac{r_{l}}{1-z_{l}^{2}}
\left(
z_{l}
+ \sum_{j=1}^{l-1}  r_{j}
- \sum_{j=l+1}^{p}  r_{j}
+ z_{l} \sum_{j=1}^{p}  r_{j}
\right)
.
\end{split}
\end{equation*}

We have then 
\begin{equation*}
\begin{split}
&
\frac{\partial \Esp{\theta_{1},...,\theta_{p} \in \left[0,2\pi\right]^{p}}{\E{\log}(\Omega(p,r_{j},z_{j}))}
}{\partial z_{l}}
=
0
\iff
z_{l}
\left(
1 + \sum_{j=1}^{p}  r_{j}
\right)
=
\sum_{j=l+1}^{p}  r_{j}
-
\sum_{j=1}^{l-1}  r_{j}
.
\end{split}
\end{equation*}

In other words,
\begin{equation*}
z_{l}
=
\frac{\displaystyle\sum_{j=l+1}^{p}r_{j} - \displaystyle\sum_{j=1}^{l-1}r_{j}}{1 + \displaystyle\sum_{j=1}^{p}r_{j}}.
\end{equation*}

$\square$

\subsection{Proof of Theorem \ref{cor:nuevasuma}}

To prove Theorem \ref{cor:nuevasuma} the following lemma will be useful.

\begin{lem}\label{lem3}
If $r_j=r_{p+1-j}$ and $z_j$ are chosen as in Proposition \ref{Propminparallel} we then have
	\begin{align*}
	\frac{1}{2}\sum_{j=1}^{p} \sum_{k=1}^{p}  r_{j}r_{k}\log \left( 1 - z_{j}z_{k} + |z_{j} - z_{k }| \right)=&(N-1)\sum_{j=1}^{p} r_{j}(1-z_j)\log(1 - z_{j})\\-&\sum_{j=1}^{p} r_{j}\log(1 - z_j).
	\end{align*}
\end{lem}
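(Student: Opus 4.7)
The plan is to exploit the twin symmetries of the setup: the combinatorial symmetry $r_j = r_{p+1-j}$ and its consequence $z_j = -z_{p+1-j}$ (which follows immediately from the explicit formula in Proposition~\ref{Propminparallel}), together with the factorization of the awkward expression $1 - z_j z_k + |z_j - z_k|$.

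First I would observe that since $z_1 > z_2 > \cdots > z_p$, for $j \le k$ we have $|z_j - z_k| = z_j - z_k$ and hence
\[
1 - z_j z_k + |z_j - z_k| = (1+z_j)(1-z_k),
\]
while for $j \ge k$ the same expression equals $(1-z_j)(1+z_k)$ (both agreeing when $j=k$, giving $1-z_j^2$). Splitting the double sum on the left-hand side into $j<k$, $j>k$, and $j=k$ portions and using that the $j<k$ and $j>k$ pieces coincide by relabeling, I obtain
\[
\tfrac{1}{2}\sum_{j,k} r_j r_k \log(1-z_j z_k + |z_j-z_k|) = \sum_{j>k} r_j r_k \log\bigl((1-z_j)(1+z_k)\bigr) + \tfrac{1}{2}\sum_j r_j^2 \log(1-z_j^2).
\]

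Next I would apply the symmetries. Expanding the logarithms gives terms of two types: those containing $\log(1-z_\bullet)$ and those containing $\log(1+z_\bullet)$. The substitution $j \mapsto p+1-j$ combined with $r_j = r_{p+1-j}$ and $z_j = -z_{p+1-j}$ turns every $\log(1+z_j)$ term into a $\log(1-z_j)$ term, with the index sets $\{k : k > j\}$ mapping bijectively to $\{k' : k' < j'\}$. This collapses the whole expression into
\[
\Bigl(2\sum_{k<j} r_k + r_j\Bigr) r_j \log(1-z_j),
\]
summed over $j$.

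Finally, I would invoke the second form of the height formula in Proposition~\ref{Propminparallel}, namely $(N-1)(1-z_j) = 1 + r_j + 2\sum_{k<j} r_k$, to rewrite $2\sum_{k<j} r_k + r_j = (N-1)(1-z_j) - 1$. Substituting yields exactly
\[
(N-1)\sum_j r_j (1-z_j)\log(1-z_j) - \sum_j r_j \log(1-z_j),
\]
as required. The main delicacy is bookkeeping the index relabeling in the symmetry step so that the $\log(1+z)$ terms genuinely merge with the $\log(1-z)$ terms and the diagonal $j=k$ contribution $\frac{1}{2}\sum_j r_j^2 \log(1+z_j)$ is correctly absorbed; once that is done, the rest is a direct substitution.
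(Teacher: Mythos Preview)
Your argument is correct, and in fact cleaner than the paper's own proof. The key observation you make that the paper does not isolate so sharply is the factorization
\[
1 - z_j z_k + |z_j - z_k| = (1-z_{\max(j,k)})(1+z_{\min(j,k)}),
\]
which immediately separates the double sum into $\log(1-z_\bullet)$ and $\log(1+z_\bullet)$ pieces. Applying the involution $j\mapsto p+1-j$ once converts every $\log(1+z_j)$ into a $\log(1-z_j)$ (your verification that the diagonal piece $\tfrac12\sum_j r_j^2\log(1+z_j)$ and the off-diagonal piece $\sum_{j>k}r_jr_k\log(1+z_k)$ both transform into their $\log(1-z)$ counterparts is straightforward once written out). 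The final substitution via Proposition~\ref{Propminparallel} is then immediate.

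By contrast, the paper's proof introduces auxiliary quantities $a_{j,k}$ and $b_{j,k}$, partitions the index square $\{1,\dots,p\}^2$ into four quadrants according to whether $j,k$ lie above or below the midpoint $M$ (with $p=2M-1$), and repeatedly uses $z_M=0$ to simplify boundary terms. This requires considerably more bookkeeping. Your route avoids the midpoint decomposition entirely and never needs $p$ to be odd or $z_M=0$; it therefore proves the lemma under the sole hypotheses actually stated ($r_j=r_{p+1-j}$ and optimal heights), which is a slight strengthening.
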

\begin{proof}
	Let 
	\[
	a_{j,k}=r_{j}r_{k}\log \left( 1 - z_{j}z_{k} + |z_{j} - z_{k }| \right) ,\quad 	b_{j,k}=r_{j}r_{k}\log \left( 1 + z_{j}z_{k} + |z_{j} + z_{k }| \right)
	\]
	and note that they satisfy:
	\[
	a_{j,k}=a_{k,j},\quad a_{j,p+1-k}=b_{j,k} ,\quad a_{p+1-j,k}=b_{j,k} ,\quad a_{p+1-j,p+1-k}=a_{j,k},\quad a_{M,M}=0.
	\]
	We thus have
	\[
	\sum_{j=1}^{p} \sum_{k=1}^{p}a_{j,k}=\sum_{j=1}^pa_{j,j}+\sum_{\stackrel{j,k=1}{j\neq k}}^{p} a_{j,k}=2\sum_{j=1}^{M-1}r_j^2\log(1-z_j^2)+\sum_{\stackrel{j,k=1}{j\neq k}}^{p} a_{j,k}.
	\]
	Moreover, recalling that $p=2M-1$,
	\begin{align}\label{eq:unaporaqui}
	\sum_{\stackrel{j,k=1}{j\neq k}}^{p} a_{j,k}=&\sum_{\stackrel{j,k=1}{j\neq k}}^Ma_{j,k}+\sum_{j=1}^{M} \sum_{k=M+1}^{2M-1}a_{j,k}+\sum_{j=M+1}^{2M-1} \sum_{k=1}^{M}a_{j,k}+\sum_{\stackrel{j,k=M+1}{j\neq k}}^{2M-1}a_{j,k}.
	\end{align}
	The two sums in the middle  of the right hand term in \eqref{eq:unaporaqui} can be rewritten as
	\[
	\sum_{j=1}^{M} \sum_{k=1}^{M-1}b_{j,k}+\sum_{j=1}^{M-1} \sum_{k=1}^{M}b_{j,k}=2\sum_{j=1}^{M} \sum_{k=1}^{M-1}b_{j,k},
	\]
	and using that $z_j\geq0$ for $1\leq j\leq M$ this last equals
	\begin{align*}
	2\sum_{j=1}^{M} \sum_{k=1}^{M-1}b_{j,k}=&	2\sum_{j=1}^{M} \sum_{k=1}^{M-1}r_jr_k\log(1+z_j)+	2\sum_{j=1}^{M} \sum_{k=1}^{M-1}r_jr_k\log(1+z_k)\\
	=&2\left(\sum_{k=1}^{M-1}r_k\right)\sum_{j=1}^{M} r_j\log(1+z_j)+	2\left(\sum_{j=1}^{M}r_j\right) \sum_{k=1}^{M-1}r_k\log(1+z_k)\\
	=&2\left(r_M+2\sum_{j=1}^{M-1}r_j\right) \sum_{k=1}^{M-1}r_k\log(1+z_k),
	\end{align*}
	where in the last step we have used that $z_M=0$. From \eqref{eq:unaporaqui} we then have proved that the sum in the lemma equals
	\begin{multline*}
	\sum_{j=1}^{M-1}r_j^2\log(1-z_j^2)+\left(r_M+2\sum_{j=1}^{M-1}r_j\right) \sum_{k=1}^{M-1}r_k\log(1+z_k)+\frac12\sum_{\stackrel{j,k=1}{j\neq k}}^Ma_{j,k}+\frac12\sum_{\stackrel{j,k=M+1}{j\neq k}}^{2M-1}a_{j,k}=\\	\sum_{j=1}^{M-1}r_j^2\log(1-z_j^2)+\left(r_M+2\sum_{j=1}^{M-1}r_j\right) \sum_{k=1}^{M-1}r_k\log(1+z_k)+\frac12\sum_{\stackrel{j,k=1}{j\neq k}}^Ma_{j,k}+\frac12\sum_{\stackrel{j,k=1}{j\neq k}}^{M-1}a_{j,k},
	\end{multline*}
	where we have used $a_{p+1-j,p+1-k}=a_{j,k}$. The two sums in the expression above have many common terms. We can rearrange them as follows:
	\begin{align*}
	\frac12\sum_{\stackrel{j,k=1}{j\neq k}}^Ma_{j,k}+\frac12\sum_{\stackrel{j,k=1}{j\neq k}}^{M-1}a_{j,k}=&\sum_{\stackrel{j,k=1}{j\neq k}}^{M-1}a_{j,k}+\frac12\sum_{k=1}^{M-1}a_{M,k}+\frac12\sum_{j=1}^{M-1}a_{j,M}\\
	=&2\sum_{j=1}^{M-1}\sum_{k=1}^{j-1}a_{j,k}+\sum_{k=1}^{M-1}a_{M,k}\\
	=&2\sum_{j=1}^{M-1}\sum_{k=1}^{j-1}a_{j,k}+\sum_{k=1}^{M-1}r_{M}r_{k}\log \left( 1+ z_{k } \right) ,
	\end{align*}
	where again we are using $a_{j,k}=a_{k,j}$ and $z_M=0$. All in one, we have proved that the sum in the lemma equals
	\begin{multline*}
	\sum_{j=1}^{M-1}r_j^2\log(1-z_j^2)+\left(r_M+2\sum_{j=1}^{M-1}r_j\right) \sum_{k=1}^{M-1}r_k\log(1+z_k)\\+2\sum_{j=1}^{M-1}\sum_{k=1}^{j-1}a_{j,k}+\sum_{k=1}^{M-1}r_{M}r_{k}\log \left( 1+ z_{k } \right).
	\end{multline*}
	Some little algebra then shows that
		\begin{align*}
		\frac{1}{2}\sum_{j=1}^{p} \sum_{k=1}^{p}  r_{j}r_{k}\log \left( 1 - z_{j}z_{k} + |z_{j} - z_{k }| \right)=&2\sum_{j=1}^{M-1} \sum_{k=1}^{j-1}  r_{j}r_{k}\log (1 - z_{j}),\\
		&+2\sum_{j=1}^{M-1} \sum_{k=1}^{j-1}  r_{j}r_{k}\log (1+z_k),\\
		&+\sum_{j=1}^{M-1}r_j^2\log(1-z_j^2),\\
		&+2\left(\sum_{j=1}^{M}r_j\right) \sum_{k=1}^{M-1}r_k\log(1+z_k).
		\end{align*}
		Changing the summation order and the name of the variables, the second term can be rewritten as
		\begin{multline*}
		2\sum_{j=1}^{M-2}r_j\log(1+z_j)\sum_{k=j+1}^{M-1}r_k=		
		2\sum_{j=1}^{M-2}r_j\log(1+z_j)\left(\sum_{k=1}^{M-1}r_k-r_j-\sum_{k=1}^{j-1}r_k\right)\\=2\sum_{j=1}^{M-1}r_j\log(1+z_j)\left(\frac{N-r_M}{2}-1-r_j-\sum_{k=1}^{j-1}r_k\right)
		\end{multline*}
		We have then proved:
		\begin{align*}
		\frac{1}{2}\sum_{j=1}^{p} \sum_{k=1}^{p}  r_{j}r_{k}\log \left( 1 - z_{j}z_{k} + |z_{j} - z_{k }| \right)=&2\sum_{j=1}^{M-1} r_{j}\log (1 - z_{j})\left(\sum_{k=1}^{j-1} r_{k} \right)\\
		&+2\sum_{j=1}^{M-1}r_j\log(1+z_j)\left(\frac{N-r_M}{2}-1-r_j-\sum_{k=1}^{j-1}r_k\right)\\
		&+\sum_{j=1}^{M-1}r_j^2\log(1-z_j^2)\\
		&+\left(N-2+r_M\right) \sum_{j=1}^{M-1}r_j\log(1+z_j).
		\end{align*}
		After simplification, we get
	\begin{align*}
	\frac{1}{2}\sum_{j=1}^{p} \sum_{k=1}^{p}  r_{j}r_{k}\log \left( 1 - z_{j}z_{k} + |z_{j} - z_{k }| \right)=&\sum_{j=1}^{M-1} r_{j}\left(r_j+2\sum_{k=1}^{j-1} r_{k} \right)\log(1 - z_{j})\\
	&-\sum_{j=1}^{M-1} r_{j}\left(r_j+2\sum_{k=1}^{j-1} r_{k} \right)\log (1+z_j)\\
	&+\left(2N-4\right) \sum_{j=1}^{M-1}r_j\log(1+z_j).
	\end{align*}
	Now we look at the first two terms recalling that
	\[
	z_j=1-\frac{1+r_j+2\sum_{k=1}^{j-1}r_k}{N-1}\Longrightarrow r_j+2\sum_{k=1}^{j-1}r_k=(N-1)(1-z_j)-1,
	\]
	and hence the sum in the lemma equals
	\begin{align*}
&(N-1)\sum_{j=1}^{M-1} r_{j}(1-z_j)\log(1 - z_{j})-\sum_{j=1}^{M-1} r_{j}\log(1 - z_{j})\\
&-(N-1)\sum_{j=1}^{M-1} r_{j}(1-z_j)\log (1+z_j)+\sum_{j=1}^{M-1} r_{j}\log(1 + z_{j})\\
&+\left(2N-4\right) \sum_{j=1}^{M-1}r_j\log(1+z_j),
	\end{align*}
	that is
	\begin{align*}
	&(N-1)\sum_{j=1}^{M-1} r_{j}(1-z_j)\log(1 - z_{j})-\sum_{j=1}^{M-1} r_{j}\log(1 - z_{j})\\
	&+(N-1)\sum_{j=1}^{M-1} r_{j}(1+z_j)\log (1+z_j)-\sum_{j=1}^{M-1} r_{j}\log(1 + z_{j}).\\
	\end{align*}	
	The symmetry $z_j=z_{p+1-j}$ implies that the last expression equals
		\begin{align*}
		&(N-1)\sum_{j=1}^{M-1} r_{j}(1-z_j)\log(1 - z_{j})-\sum_{j=1}^{M-1} r_{j}\log(1 - z_{j})\\
		&+(N-1)\sum_{j=M+1}^{p} r_{j}(1-z_j)\log (1-z_j)-\sum_{j=M+1}^{p} r_{j}\log(1 - z_{j}).\\
		\end{align*}
		We thus have proved (using $z_M=0$) that the sum of the lemma equals
\[
(N-1)\sum_{j=1}^{p} r_{j}(1-z_j)\log(1 - z_{j})-\sum_{j=1}^{p} r_{j}\log(1 - z_j).
\]
\end{proof}


We now finally prove Theorem \ref{cor:nuevasuma}.
From Proposition \ref{PropEnergyOmegaGeneral} and Lemma \ref{lem3} we have
	\begin{equation*}
	\begin{split}
	&
	\Esp{\theta_{1},...,\theta_{p} \in \left[0,2\pi\right]^{p}}{\E{\log}(\Omega(p,r_{j}))}
	=
	-2\log(2)
	-(N-2)\log(4)
	-\frac12\sum_{j=1}^{p} r_j  \log(1 - z_{j}^{2})
	\\
	&
	-\sum_{j=1}^{p} r_{j}\log r_{j}
	-(N-1)\sum_{j=1}^{p} r_{j}(1-z_j)\log(1 - z_{j})+\sum_{j=1}^{p} r_{j}\log(1 - z_j)
	.
	\end{split}
	\end{equation*}
	Now, note that using $z_j=-z_{p+1-j}$ we have
	\begin{multline*}
	\frac12\sum_{j=1}^{p} r_j  \log(1 - z_{j}^{2})=\frac12\sum_{j=1}^{p} r_j  \log(1 - z_{j})+\frac12\sum_{j=1}^{p} r_j  \log(1 + z_{j})\\\frac12\sum_{j=1}^{p} r_j  \log(1 - z_{j})+\frac12\sum_{j=1}^{p} r_j  \log(1 - z_{j})=\sum_{j=1}^{p} r_j  \log(1 - z_{j}).
	\end{multline*}
	The theorem follows.

$\square$


\appendix
\section{The error in the composite trapezoidal rule}
\label{appendix}

The following result is a well known fact in Fourier analysis.

\begin{lem}\label{lem:coefsbound}
	Let $f:[n,n+1]\to\R$ be a $C^1$ function with $n\in\mathbb{Z}$. Let $C>0$ be such that $|f'|\leq C$. Then, for all $k \geq 1$, 
	\[
	\left|\int_{n}^{n+1}\cos(2\pi kx) f(x)\,dx\right|\leq\frac{C}{2\pi k}.
	\]
\end{lem}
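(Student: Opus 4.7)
The plan is to use a single integration by parts to transfer the derivative from the oscillating factor onto $f$, then bound the resulting integrand crudely. The key observation making this work cleanly is that $n$ and $k$ are integers, so the antiderivative $\sin(2\pi kx)/(2\pi k)$ of $\cos(2\pi kx)$ vanishes at both endpoints $x=n$ and $x=n+1$, killing the boundary term.

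Concretely, I would write
\[
\int_{n}^{n+1}\cos(2\pi kx)\,f(x)\,dx=\left[\frac{\sin(2\pi kx)}{2\pi k}\,f(x)\right]_{x=n}^{x=n+1}-\int_{n}^{n+1}\frac{\sin(2\pi kx)}{2\pi k}\,f'(x)\,dx.
\]
The first bracket is zero because $\sin(2\pi kn)=\sin(2\pi k(n+1))=0$ for $k,n\in\mathbb{Z}$. Taking absolute values on the remaining integral and using $|\sin|\le 1$ together with the hypothesis $|f'|\le C$, one gets
\[
\left|\int_{n}^{n+1}\cos(2\pi kx)\,f(x)\,dx\right|\le \frac{1}{2\pi k}\int_{n}^{n+1}|f'(x)|\,dx\le \frac{C}{2\pi k},
\]
which is the claimed bound.

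There is no real obstacle here: the $C^{1}$ regularity is exactly what makes integration by parts legal, and the integer endpoints are what make the boundary contribution vanish. The whole argument is essentially the standard Riemann--Lebesgue estimate specialized to a unit interval with synchronized endpoints, and it fits in a few lines.
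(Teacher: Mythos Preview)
Your proof is correct and is exactly the approach the paper intends: the paper's own proof is the single line ``Integrate by parts,'' and you have spelled out precisely that computation, including the crucial observation that the boundary term vanishes because $\sin(2\pi k x)$ is zero at the integer endpoints.
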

\begin{proof}
Integrate by parts.
\end{proof}
\begin{lem}\label{lem:em}
Let $f:[t_{\ell-1},t_\ell]\to\R$ be a $C^2$ function and assume that it is $C^3$ in the open interval with $|f'''|\leq C$. Then,
		\begin{align*}
		\left|T_{[t_{\ell-1},t_\ell]}(f)-\int_{t_\ell-1}^{t_\ell}f(x)\,dx-\frac{f'(t_\ell)-f'(t_{\ell-1})}{12}\right|\leq \frac{C(t_\ell-t_{\ell-1})}{24\pi}.
		\end{align*}
\end{lem}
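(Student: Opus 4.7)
The plan is to derive an exact Euler--Maclaurin-type identity for the trapezoidal error and then bound its remainder by Fourier expansion, a route for which Lemma~\ref{lem:coefsbound} looks tailor-made. First I rewrite the composite rule as $T_{[t_{\ell-1},t_\ell]}(f)=\sum_{n=t_{\ell-1}}^{t_\ell-1}\frac{f(n)+f(n+1)}{2}$, reducing to single-step errors on unit intervals $[n,n+1]$, $n\in\mathbb{Z}$. For each such interval, two successive integrations by parts---first against $x-n-\tfrac{1}{2}$, then against the second Bernoulli polynomial $B_2(u)=u^2-u+\tfrac{1}{6}$, using the crucial identity $B_2(0)=B_2(1)=\tfrac{1}{6}$ to kill the boundary $f'$-contributions---yield the exact single-step identity
$$\frac{f(n)+f(n+1)}{2}-\int_n^{n+1}f(x)\,dx=\frac{f'(n+1)-f'(n)}{12}-\frac{1}{2}\int_n^{n+1}B_2(x-n)f''(x)\,dx.$$
Summing over $n=t_{\ell-1},\ldots,t_\ell-1$ telescopes the $f'$-terms and glues the integrals into one over $[t_{\ell-1},t_\ell]$ against the $1$-periodic function $B_2(\{\cdot\})$, giving
$$T_{[t_{\ell-1},t_\ell]}(f)-\int_{t_{\ell-1}}^{t_\ell}f(x)\,dx-\frac{f'(t_\ell)-f'(t_{\ell-1})}{12}=-\frac{1}{2}\int_{t_{\ell-1}}^{t_\ell}B_2(\{x\})f''(x)\,dx.$$

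Next I expand $B_2$ in its absolutely and uniformly convergent Fourier series $B_2(\{x\})=\pi^{-2}\sum_{k\geq1}k^{-2}\cos(2\pi kx)$ (a direct computation of the Fourier coefficients) and exchange sum with integral. Splitting the resulting integral into unit pieces $\int_n^{n+1}\cos(2\pi kx)f''(x)\,dx$ and applying Lemma~\ref{lem:coefsbound} with $f''$ in the role of the $C^1$ function---its derivative $f'''$ is bounded by $C$ by hypothesis---bounds each piece by $C/(2\pi k)$, and hence the full integral by $C(t_\ell-t_{\ell-1})/(2\pi k)$. Combining yields
$$\left|T_{[t_{\ell-1},t_\ell]}(f)-\int_{t_{\ell-1}}^{t_\ell}f(x)\,dx-\frac{f'(t_\ell)-f'(t_{\ell-1})}{12}\right|\leq\frac{1}{2\pi^2}\sum_{k\geq1}\frac{1}{k^2}\cdot\frac{C(t_\ell-t_{\ell-1})}{2\pi k}=\frac{C(t_\ell-t_{\ell-1})}{4\pi^3}\sum_{k\geq1}\frac{1}{k^3},$$
and the elementary comparison $\sum_{k\geq 1} k^{-3}\leq\sum_{k\geq 1} k^{-2}=\pi^2/6$ collapses this to the claimed bound $C(t_\ell-t_{\ell-1})/(24\pi)$.

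I anticipate no substantive obstacle; every step is classical. The only point requiring a bit of care is the bookkeeping in the single-step identity: it is precisely the coincidence $B_2(0)=B_2(1)$ that kills the stray boundary terms in the second integration by parts and produces, after summation, the clean telescoping expression $(f'(t_\ell)-f'(t_{\ell-1}))/12$. The absolute convergence of the Fourier series of $B_2$ makes the exchange of sum and integral trivial, and the final reduction of $\zeta(3)$ to $\zeta(2)$ is the only place where the bound is not tight---one could get a somewhat sharper constant, but the stated bound is exactly what is needed for the applications in Lemmas~\ref{lem:eulermclaurin} and~\ref{lem:eulermclaurin2}.
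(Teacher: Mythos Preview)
Your proof is correct and follows essentially the same route as the paper's: the paper quotes the identical remainder identity $S=\sum_{k\geq1}\frac{1}{2\pi^2k^2}\int_{t_{\ell-1}}^{t_\ell}\cos(2\pi k(x-t_{\ell-1}))f''(x)\,dx$ from the Euler--Maclaurin theorem in Kress, while you derive it yourself via two integrations by parts with $B_1$, $B_2$ and the Fourier expansion of $B_2(\{\cdot\})$. After that point the two proofs coincide verbatim---splitting into unit intervals, applying Lemma~\ref{lem:coefsbound} to $f''$, and bounding $\zeta(3)$ by $\zeta(2)=\pi^2/6$.
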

	\begin{proof}
		Let $S$ be the quantity in the lemma. From the Euler-Macalaurin identity (see the version in \cite[Theorem 9.26]{Kress}), 
		\begin{multline*}
		S=\sum_{k=1}^\infty\frac{1}{2\pi^2 k^2}	\int_{t_\ell-1}^{t_\ell} \cos(2\pi k(x-t_{\ell-1}))f''(x)\,dx=\\
			\sum_{k=1}^\infty\frac{1}{2\pi^2 k^2}	\sum_{n=t_{\ell-1}}^{t_\ell-1}\int_{n}^{n+1}\cos(2\pi k(x-t_{\ell-1}))f''(x)\,dx.
		\end{multline*}
		From Lemma \ref{lem:coefsbound}, the integral inside is at most $C\pi/2$. Then,
		\[
		S\leq \sum_{k=1}^\infty\frac{t_\ell-t_{\ell-1}}{2\pi^2 k^2}\,\frac{C}{2\pi k}\leq\frac{C(t_\ell-t_{\ell-1})}{24\pi},
		\]
		as claimed.
	
	\end{proof}

\begin{bibdiv}

\begin{biblist}

\bib{EJP3733}{article}{
    AUTHOR = {Alishahi, K. }
    AUTHOR = { Zamani, M.},
    TITLE = {The spherical ensemble and uniform distribution of points on the sphere},
   JOURNAL = {Electron. J. Probab.},
  FJOURNAL = {Electronic Journal of Probability},
    VOLUME = {20},
      YEAR = {2015},
     PAGES = {23--27},
      ISSN = {1083-6489},
}

\bib{ABS11}{article}{
	author = {Armentano, D.}
	author = {Beltrán, C.} 
	author = {Shub, M.},
 	journal = {Transactions of the American Mathematical Society},
 	number = {6},
 	pages = {2955-2965},
 	publisher = {American Mathematical Society},
 	title = {Minimizing the discrete logarithmic energy on the sphere: the role of random polynomials},
 	volume = {363},
 	year = {2011}
    }    
    
\bib{BS18}{article}{
    AUTHOR = {B\'etermin, L. }
    AUTHOR = { Sandier, E.},
    TITLE = {Renormalized Energy and Asymptotic Expansion of Optimal Logarithmic Energy on the Sphere},
	journal={Constructive Approximation},	
	year={2018},
	volume={47},
	number={1},
	pages={39--74},
}

\bib{Brauchart2008}{article}{
    AUTHOR = {Brauchart, J. S. }
     TITLE = {Optimal  logarithmic  energy  points  on  the  unit  sphere},
   JOURNAL = {Math.  Comp.},
    VOLUME = {77},
      YEAR = {2008},
    NUMBER = {263},
     PAGES = {1599--1613--326},
}

\bib{Brauchart2015293}{article}{
    AUTHOR = {Brauchart, J. S. }AUTHOR = { Grabner, P. J.},
     TITLE = {Distributing many points on spheres: minimal energy and
              designs},
   JOURNAL = {J. Complexity},
  FJOURNAL = {Journal of Complexity},
    VOLUME = {31},
      YEAR = {2015},
    NUMBER = {3},
     PAGES = {293--326},
      ISSN = {0885-064X},
}

\bib{BLMS:BLMS0621}{article}{
	author = {Brauchart, J. S. }
	AUTHOR = {Hardin, D. P.}
	AUTHOR = { Saff, E. B.},
	title = {The {R}iesz energy of the Nth roots of unity: an asymptotic expansion for large N},
	journal = {Bulletin of the London Mathematical Society},
	volume = {41},
	number = {4},
	publisher = {Oxford University Press},
	pages = {621--633},
	year = {2009},
}

\bib{BHS12}{article}{
    AUTHOR = {Brauchart, J. S. }
    AUTHOR = { Hardin, D. P. }
    AUTHOR = { Saff, E. B.},
     TITLE = {The next-order term for optimal {R}iesz and logarithmic energy asymptotics on the sphere},
 	BOOKTITLE = {Recent advances in orthogonal polynomials, special functions, and their applications},
    SERIES = {Contemp. Math.},
    VOLUME = {578},
     PAGES = {31--61},
 PUBLISHER = {Amer. Math. Soc., Providence, RI},
      YEAR = {2012},
}

\bib{doohovskoy2011foundations}{book}{
  title={Foundations of Modern Potential Theory},
  author={Doohovskoy, A. P. }
  AUTHOR = {Landkof, N. S.},
  isbn={9783642651854},
  lccn={77186131},
  series={Grundlehren der mathematischen Wissenschaften},
  url={https://books.google.es/books?id=ZUpxMQEACAAJ},
  year={2011},
  publisher={Springer Berlin Heidelberg}
}

\bib{Dubickas}{article}{
  title={On the maximal product of distances between points on a sphere},
  author={Dubickas, A.}
	journal = {Liet. Mat. Rink.},
    VOLUME = {36},
	number = {3},
	pages = {303--312},
	year={1996},}

\bib{zwillinger2014table}{book}{
    AUTHOR = {Gradshteyn, I. S. }
    AUTHOR = { Ryzhik, I. M.},
     TITLE = {Table of integrals, series, and products},
   EDITION = {Eighth},
      NOTE = {Translated from the Russian, Translation edited and with a preface by Daniel Zwillinger and Victor Moll, Revised from the seventh edition},
 PUBLISHER = {Elsevier/Academic Press, Amsterdam},
      YEAR = {2015},
     PAGES = {xlvi+1133},
      ISBN = {978-0-12-384933-5},
}

\bib{dolomites}{article}{
    AUTHOR = {Hardin, D. P. }
    AUTHOR = {Michaels, T.}
    AUTHOR = {Saff, E. B.},
     TITLE = {A Comparison of Popular Point Configurations on $\mathbb{S}^{2}$},
    SERIES = {Contemp. Math.},
	journal = {Dolomites Research Notes on Approximation},
    VOLUME = {9},
	number = {1},
	publisher = {Padova University Press},
	pages = {16--49},
	year={2016},
}

\bib{HOLHOS20141092}{article}{
    AUTHOR = {Holhoş, A.}
    AUTHOR = {Roşca, D.},
    TITLE = {An octahedral equal area partition of the sphere and near optimal configurations of points},
    SERIES = {Contemp. Math.},
	journal = {Computers \& Mathematics with Applications},
    VOLUME = {67},
	number = {5},
	publisher = {Padova University Press},
	pages = {1092--1107},
	year = {2014},
}

\bib{Kress}{book}{
	AUTHOR = {Kress, R.},
	TITLE = {Numerical analysis},
	SERIES = {Graduate Texts in Mathematics},
	VOLUME = {181},
	PUBLISHER = {Springer-Verlag, New York},
	YEAR = {1998},
	PAGES = {xii+326},
}

\bib{krishnapur2009}{article}{
author = {Krishnapur, M.},
fjournal = {The Annals of Probability},
journal = {Ann. Probab.},
number = {1},
pages = {314--346},
publisher = {The Institute of Mathematical Statistics},
title = {From random matrices to random analytic functions},
volume = {37},
year = {2009},
}

\bib{RSZ94}{article}{
    AUTHOR = {Rakhmanov, E.}
    AUTHOR = {Saff, E.}
    AUTHOR = {Zhou, Y.},
    TITLE = {Minimal Discrete Energy on the Sphere},
	journal = {Mathematical Research Letters},
	number = {1},
	publisher = {International Press},
	pages = {647--662},
	year = {1994},
}

\bib{SS15}{article}{
    AUTHOR = {Sandier, E.}
    AUTHOR = {Serfaty, S.}
    TITLE = {$2d$ Coulomb gases and the renormalized energy},
	journal = {Annals of Probability},
	volume = {43}
	number = {4},	
	pages = {2026--2083},
	year = {2015},
}

\bib{SS93}{article}{
    AUTHOR = {Shub, M.}
    AUTHOR = {Smale, S.}
    TITLE = {Complexity of Bezout's theorem. III. Condition number and packing},
	journal = {Journal of Complexity},
	volume = {49}
	number = {1},	
	pages = {4--14},
	year = {1993},
} 

\bib{Sm2000}{article}{
    AUTHOR = {Smale, S.}
    TITLE = {Mathematical problems for the next century},
	journal = {Mathematics: Frontiers and Perspectives},	
	pages = {271--294},
	publisher = {American Mathematical Society},
	year = {2000},
}

\bib{Wagner}{article}{
    AUTHOR = {Wagner, G.}
    TITLE = {On the product of distances to a point set on a sphere},
	journal = {Journal of the Australian Mathematical Society},	
	volume = {47}
	number = {3},	
	pages = {466--482},
	year = {1989},
}

\end{biblist}
\end{bibdiv}


\end{document}